\newtheorem{theorem}{Theorem}[section]
\newtheorem{lemma}[theorem]{Lemma}
\newtheorem{proposition}[theorem]{Proposition}
\newtheorem{corollary}[theorem]{Corollary}
\newtheorem{definition}[theorem]{Definition}
\newtheorem{example}[theorem]{Example}
\newtheorem{remark}[theorem]{Remark}
\def\<{{\langle}} \def\>{{\rangle}}       
\begin{document}

\author{Aaron Potechin \\
MIT\\
potechin@mit.edu}

\title[Aaron Potechin]
{Monotone switching networks for directed connectivity are strictly more powerful than certain-knowledge 
switching networks}

\keywords{L,NL,computational complexity, monotone complexity, space complexity, switching networks}

\begin{abstract}
\noindent 
L (Logarithmic space) versus NL (Non-deterministic logarithmic space) is one of the great open problems 
in computational complexity theory. In the paper ``Bounds on monotone switching networks for directed connectivity'', 
we separated monotone analogues of L and NL using a model called the switching network model. 
In particular, by considering inputs consisting of just a path and isolated vertices, we proved that any monotone 
switching network solving directed connectivity on $N$ vertices must have size at least $N^{\Omega(\lg(N))}$ 
and this bound is tight.\\ 
If we could show a similar result for general switching networks solving directed connectivity, 
then this would prove that $L \neq NL$. However, proving lower bounds for general switching networks solving directed 
connectivity requires proving stronger lower bounds on monotone switching networks for directed connectivity. To work 
towards this goal, we investigated a different set of inputs which we believed to be hard for monotone switching networks 
to solve and attempted to prove similar lower size bounds. Instead, we found that this set of inputs is actually easy 
for monotone switching networks for directed connectivity to solve, yet if we restrict ourselves to certain-knowledge 
switching networks, which are a simple and intuitive subclass of monotone switching networks for directed connectivity, then 
these inputs are indeed hard to solve.\\ 
In this paper, we give this set of inputs, demonstrate a ``weird'' polynomially-sized monotone switching network 
for directed connectivity which solves this set of inputs, and prove that no polynomially-sized certain-knowledge switching 
network can solve this set of inputs, thus proving that monotone switching networks for directed connectivity are 
strictly more powerful than certain-knowledge switching networks.
\end{abstract}
\maketitle
\thispagestyle{empty}
\noindent \textbf{Acknowledgement:}\\
This material is based on work supported by the National Science Foundation Graduate Research Fellowship 
under Grant No. 0645960.
\newpage
\section{Introduction}\label{intro}
\setcounter{page}{1}
\noindent $L$ versus $NL$, the problem of whether non-determinism helps in logarithmic space 
bounded computation, is a longstanding open question in computational complexity. At present, only 
a few results are known. It is known that the problem is equivalent to the question of whether there is 
a log-space algorithm for the \textit{directed connectivity} problem, namely given an $N$ vertex directed 
graph $G$ and pair of vertices $s,t$, find out if there is a directed path from $s$ to $t$ in $G$. 
In 1970, Savitch \cite{savitch} gave an $O(\log^{2}N)$-space deterministic algorithm for 
directed connectivity, thus proving that $NSPACE(g(n)) \subseteq DSPACE((g(n)^2))$ for 
every space constructable function $g$. In 1987 and 1988, Immerman \cite{nlconlone} and 
Szelepcsenyi \cite{nlconltwo} independently gave an $O(\log N)$-space non-deterministic algorithm 
for directed \textit{non-connectivity}, thus proving that $NL = co$-$NL$. For the problem of 
\textit{undirected connectivity} (i.e. where the input graph $G$ is undirected), a probabilistic 
algorithm was shown in 1979 using random walks by Aleliunas, Karp, Lipton, $\rm{Lov\acute{a}sz}$, and 
Rackoff \cite{randomwalk}, and in 2005, Reingold \cite{undirectedgraph} gave a deterministic 
$O(\log N)$-space algorithm for the same problem, showing that undirected connectivity is in $L$. Trifonov 
\cite{trifonov} independently gave an $O(\lg{N}\lg{\lg{N}})$ algorithm for undirected connectivity.\\
In terms of monotone computation, in 1988 Karchmer and Wigderson \cite{karchmerwigderson} showed that any 
monotone circuit solving undirected connectivity has depth at least $\Omega((\lg{N})^2)$, thus 
proving that undirected connectivity is not in monotone-$NC^1$ and separating monotone-$NC^1$ 
and monotone-$NC^2$. In 1997 Raz and McKenzie \cite{razmckenzie} proved that 
monotone-$NC \neq$ monotone-$P$ and for any $i$, monotone-$NC^i \neq$ monotone-$NC^{i+1}$.\\
Potechin \cite{potechin} separated monotone analogues of L and NL using the switching network model, 
described in \cite{razborov}. In particular, Potechin \cite{potechin} proved that any monotone switching network 
solving directed connectivity on $N$ vertices must have size at least $N^{\Omega(\lg(N))}$ and this bound is tight. 
To do this, Potechin \cite{potechin} first proved the result for certain-knowledge switching networks, which are 
a simple and intuitive subclass of monotone switching networks for directed connectivity. Potechin \cite{potechin} then 
proved the result for all monotone switching networks solving directed connectivity using Fourier analysis and a partial reduction 
from monotone switching networks for directed connectivity to certain-knowledge switching networks.\\
However, proving good non-monotone bounds requires proving stronger lower bounds on monotone switching networks for directed 
connectivity. The reason is that Potechin \cite{potechin} obtained the above results by considering inputs consisting of just a path 
and isolated vertices, which are the hardest inputs for monotone algorithms to solve but which are easy for non-monotone algorithms 
to solve. To obtain lower bounds on general switching networks for directed connectivity, we must consider different inputs, and 
a lower size bound on all switching networks for directed connectivity solving these inputs implies the same lower bound on all monotone 
switching networks solving these inputs.\\
In this paper, we consider a set of inputs which we originally thought were hard for monotone switching networks to solve. 
Instead, we show that there is a monotone switching network for directed connectivity of polynomial size which solves these inputs, but any 
certain-knowldge switching network solving these inputs must have super-polynomial size. Thus, monotone switching networks 
for directed connectivity are strictly more powerful than certain-knowledge switching networks.\\
To properly state these results, we must first recall some definitions from Potechin \cite{potechin} and introduce a few new definitions. 
These definitions will be used throughout the paper.
\subsection{Definitions}\label{definitions}
\begin{definition}\label{modifiedswitchingdefinition}
A switching network for directed connectivity on a set $V(G)$ of vertices with distinguished vertices 
$s,t$ is a tuple $< G', s', t', {\mu}' >$ where $G'$ is an undirected multi-graph with distinguished 
vertices $s'$,$t'$ and ${\mu}'$ is a labeling function such that each edge $e' \in E(G')$ has a label of the form 
$v_1 \to v_2$ or $\neg(v_1 \to v_2)$ for some vertices $v_1, v_2 \in V(G)$.\\
We say that such a switching network is a switching network for directed connectivity on $N$ vertices, 
where $N = |V(G)|$, and we take its size to be $|V(G')|$. A switching network for directed connectivity is 
monotone if it has no labels of the form $\neg(v_1 \to v_2)$.
\end{definition}
\begin{definition}
We say a switching network $G'$ for directed connectivity on a set of vertices $V(G)$ accepts an input graph $G$ if there is a path $P'$
in $G'$ from $s'$ to $t'$ whose edges are all consistent with the input graph $G$ (i.e. of the form $e$ for some 
edge $e \in E(G)$ or $\neg{e}$ for some $e \notin E(G)$).\\
We say a switching network for directed connectivity is sound if it does not accept any input graphs $G$ on the set of vertices $V(G)$ which 
do not have a path from $s$ to $t$.\\
We say a switching network for directed connectivity is complete if it accepts all input graphs $G$ on the set of vertices $V(G)$ which have 
a path from $s$ to $t$.\\
If $G'$ is a switching network for directed connectivity on a set of vertices $V(G)$, then we say that $G'$ solves directed connectivity 
on $V(G)$ if $G'$ is both complete and sound.
\end{definition}
\begin{figure}\label{exampleone}
\centerline{\includegraphics[height=5cm]{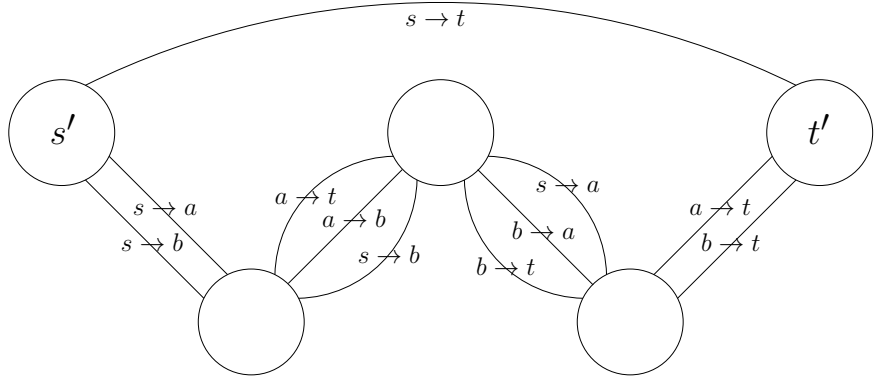}}
\caption{In this figure, we have a monotone switching network that solves directed connectivity on 
$V(G) = \{s,t,a,b\}$. There is a path from $s'$ to $t'$ in $G'$ whose labels are consistent with the input graph $G$ if and only if there 
is a path from $s$ to $t$ in $G$. For example, if we have the edges $s \to a$, $a \to b$, and $b \to t$ 
in $G$, so there is a path from $s$ to $t$ in $G$, then in $G'$, starting from $s'$, we can take the edge 
labeled $s \to a$, then the edge labeled $a \to b$, then the edge labeled $s \to a$, and finally the 
edge labeled $b \to t$, and we will reach $t'$. If in $G$ we have the edges $s \to a$, $a \to b$, 
$b \to a$, and $s \to b$ and no other edges, so there is no path from $s$ to $t$, then in $G'$ there is 
no edge that we can take to $t'$, so there is no path from $s'$ to $t'$.}
\end{figure}
\begin{definition}
Given a nonempty set $I$ of input graphs $G$ on a set of vertices $V(G)$ with distinguished vertices $s,t$, let $I_A$ be the 
set of input graphs in $I$ which contain a path from $s$ to $t$ and let $I_R$ be the set of input graphs in $I$ which 
do not contain a path from $s$ to $t$. If $I_A \neq \emptyset$ and $I_R \neq \emptyset$, we say that a switching network 
$G'$ for directed connectivity on $V(G)$ solves the set of inputs $I$ if $G'$ accepts all $G \in I_A$ and $G'$ does not 
accept any $G \in I_R$. If $I_R = \emptyset$, we say that a switching network $G'$ for directed connectivity on $V(G)$ solves 
the set of inputs $I = I_A$ if $G'$ is sound and $G'$ accepts all $G \in I_A$. If $I_A = \emptyset$, we say that a switching network 
$G'$ for directed connectivity on $V(G)$ solves the set of inputs $I = I_R$ if $G'$ is complete and $G'$ does not accept any $G \in I_R$.
\end{definition}
\begin{figure}\label{forbutnotsolving}
\centerline{\includegraphics[height=5cm]{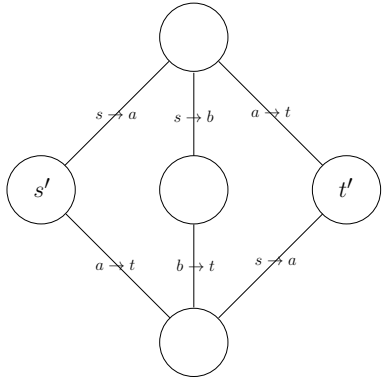}}
\caption{In this figure, we have a monotone switching network $G'$ for directed connectivity on 
$V(G) = \{s,t,a,b\}$. $G'$ accepts an input graph $G$ if and only if $G$ either has the edges $s \to a$ 
and $a \to t$ or has the edges $s \to b$ and $b \to t$ and at least one of the edges $s \to a$, $a \to t$. Thus, $G'$ is sound but 
not complete.}
\end{figure}
\begin{proposition}
If $G'$ is a switching network for directed connectivity on a set of vertices $V(G)$, then $G'$ solves directed connectivity on $V(G)$ if and only if 
$G'$ solves the set $I$ of all possible input graphs $G$ on the set of vertices $V(G)$.
\end{proposition}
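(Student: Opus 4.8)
The plan is to prove this purely by unfolding the two definitions and observing that they impose literally the same condition on $G'$. The first thing I would do is dispatch the bookkeeping required by the definition of ``$G'$ solves the set of inputs $I$'': that definition splits into three cases according to whether $I_A$ and $I_R$ are empty, so I must identify which case applies when $I$ is the set of \emph{all} input graphs on $V(G)$. In the generic situation $s \neq t$, the directed graph on $V(G)$ with the single edge $s \to t$ has a path from $s$ to $t$, so $I_A \neq \emptyset$, and the directed graph on $V(G)$ with no edges has no such path, so $I_R \neq \emptyset$. Hence we are in the first case, and ``$G'$ solves $I$'' means exactly that $G'$ accepts every $G \in I_A$ and accepts no $G \in I_R$.

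Next I would match these two clauses against completeness and soundness. By definition $I_A$ is precisely the set of input graphs on $V(G)$ that contain a path from $s$ to $t$, so ``$G'$ accepts every $G \in I_A$'' is, word for word, the definition of $G'$ being complete. Likewise $I_R$ is precisely the set of input graphs on $V(G)$ with no path from $s$ to $t$, so ``$G'$ accepts no $G \in I_R$'' is the definition of $G'$ being sound. Therefore ``$G'$ solves $I$'' is equivalent to ``$G'$ is complete and sound'', which is exactly the definition of ``$G'$ solves directed connectivity on $V(G)$''. This gives the biconditional in both directions simultaneously, since the two statements have been shown to be the same statement.

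Finally, for completeness I would check the degenerate case $s = t$ if the paper's conventions allow it: then every input graph has a (trivial) path from $s$ to $t$, so $I_R = \emptyset$ and $I = I_A$, putting us in the second case of the definition, under which ``$G'$ solves $I = I_A$'' unfolds to ``$G'$ is sound and accepts every $G \in I_A$''; here soundness is vacuous and $I_A$ is all of $I$, so this again coincides with $G'$ being complete and (vacuously) sound, i.e.\ with solving directed connectivity. Since every step is a direct appeal to a definition, there is no real obstacle here; the only point that needs care is not overlooking the nonemptiness hypotheses built into the definition of ``solves a set of inputs,'' which is why I verify the case analysis on $I_A$ and $I_R$ first rather than last.
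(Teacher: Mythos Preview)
Your proposal is correct and is precisely the intended argument: the paper states this proposition without proof because it is immediate from the definitions, and your careful unfolding of those definitions (including the case analysis on whether $I_A$ and $I_R$ are nonempty) is exactly what is needed to make that immediacy explicit.
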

\noindent In this paper, we will consider monotone switching networks $G'$ for directed connectivity which solve a set of inputs 
$I = \cup_{i}{\{G_i\}}$ where each input graph $G_i$ contains a path from $s$ to $t$. Thus, in this paper we will only consider sound monotone 
switching networks for directed connectivity, but these switching networks may not be complete.\\
We now define the difficulty of a set of inputs for monotone switching networks for 
directed connectivity.
\begin{definition}
Given a non-empty set of inputs $I$ of input graphs with vertex set $V(G)$, let $M(I)$ be the size of the smallest 
monotone switching network for directed connectivity on $V(G)$ which solves the set of inputs $I$.
\end{definition}
\noindent In this paper, we will consider families of inputs $\mathcal{I} = \{I_n\}$ where 
for each $n$, $I_n$ consists of input graphs on $n$ vertices.
\begin{definition}
We say a family of sets of inputs $\mathcal{I} = \{I_n\}$ is easy for monotone switching networks 
for directed connectivity if there is a polynomial $p(n)$ such that for all $n$, 
$M(I_n) \leq p(n)$. If not, we say that it is hard for monotone switching networks for directed 
connectivity.
\end{definition} 
\noindent Potechin \cite{potechin} introduced a subclass of monotone switching networks for 
directed connectivity called certain-knowledge switching networks which are simple but 
nevertheless have considerable power. They are defined as follows:
\begin{definition}
A knowledge set $K$ is a directed graph with $V(K) = V(G)$, and we represent $K$ by the set of its edges.\\
Given a knowledge set $K$, define the transitive closure $\bar{K}$ of $K$ as follows:\\ 
If there is no path from $s$ to $t$ in $K$, then $\bar{K} = \{v_1 \to v_2: v_1 \neq v_2,$ there is a path from $v_1$ to $v_2$ in $K\}$.\\
If there is a path from $s$ to $t$ in $K$, then $\bar{K}$ is the complete directed graph on $V(G)$.\\ 
Each transitive closure represents an equivalence class of knowledge sets. We say $K_1 = K_2$ if 
$\bar{K_1} = \bar{K_2}$ and we say $K_1 \subseteq K_2$ if $\bar{K_1} \subseteq \bar{K_2}$ as sets.
\end{definition}
\begin{definition}\label{certainknowledgedef}
A certain-knowledge description of a monotone switching network for directed connectivity on a set of vertices 
$V(G)$ is an assignment of a knowledge set $K_{v'}$ to each $v' \in V(G')$. We say a certain-knowledge 
description is valid if the following conditions hold:\\
1. $K_{s'} = \{\}$ and $K_{t'} = \{s \to t\}$.\\
2. If there is an edge $e'$ with label $v_1 \to v_2$ between vertices $v'_1$ and $v'_2$ in $G'$, then \\
$K_{v'_2} \subseteq K_{v'_1} \cup \{v_1 \to v_2\}$ and $K_{v'_1} \subseteq K_{v'_2} \cup \{v_1 \to v_2\}$\\
We say a monotone switching network for directed connectivity is a certain-knowledge switching network 
if there is a valid certain-knowledge description of it.
\end{definition}
\begin{proposition}
All certain-knowledge switching networks for directed connectivity are sound.
\end{proposition}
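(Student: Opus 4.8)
The plan is to trace, along an accepting path of $G'$, how the knowledge sets attached to the vertices of $G'$ compare with the set of edge-labels seen so far, and to show that by the time the path reaches $t'$ the labels collected already force a path from $s$ to $t$ in the input graph. The one auxiliary fact I would isolate first is a monotonicity property of knowledge sets: if $K_1 \subseteq K_2$, then $K_1 \cup K_3 \subseteq K_2 \cup K_3$ for every knowledge set $K_3$. This follows from two observations about transitive closures. First, $\overline{K_1 \cup K_3} = \overline{\bar K_1 \cup \bar K_3}$: ordinary reachability is unchanged by adding already-implied edges, and the ``promote to the complete graph'' rule fires for $K_1 \cup K_3$ exactly when it fires for $\bar K_1 \cup \bar K_3$. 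Second, $\bar K_1 \subseteq \bar K_2$ implies $\overline{\bar K_1 \cup \bar K_3} \subseteq \overline{\bar K_2 \cup \bar K_3}$, since a larger graph reaches at least as much and still promotes to the complete graph whenever the smaller one does.

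Now suppose, toward a contradiction, that $G'$ is a certain-knowledge switching network with a valid description $\{K_{v'}\}$, that $G$ is an input graph on $V(G)$ with no path from $s$ to $t$, and that $G'$ nonetheless accepts $G$. Then there is a path $P' = (u'_0, u'_1, \dots, u'_k)$ in $G'$ with $u'_0 = s'$ and $u'_k = t'$, where the edge $\{u'_{i-1}, u'_i\}$ carries a label $\ell_i = (a_i \to b_i)$ consistent with $G$; since $G'$ is monotone there are no negated labels, so ``consistent'' means precisely $a_i \to b_i \in E(G)$. Let $E_i = \{a_1 \to b_1, \dots, a_i \to b_i\}$ be the knowledge set of the first $i$ labels, so $E_0 = \{\}$ and $E_k \subseteq E(G)$ as edge sets. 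I would prove by induction on $i$ that $K_{u'_i} \subseteq E_i$. The base case is condition~1: $K_{u'_0} = K_{s'} = \{\} = E_0$. For the step, condition~2 applied to the edge $\{u'_{i-1}, u'_i\}$ with label $\ell_i$ yields $K_{u'_i} \subseteq K_{u'_{i-1}} \cup \{\ell_i\}$ (the symmetric form of condition~2 handles the edge no matter how the description orients it and no matter which direction $P'$ traverses it); then the induction hypothesis and the monotonicity property give $K_{u'_{i-1}} \cup \{\ell_i\} \subseteq E_{i-1} \cup \{\ell_i\} = E_i$, so $K_{u'_i} \subseteq E_i$.

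Taking $i = k$ and using condition~1 once more, $\{s \to t\} = K_{t'} = K_{u'_k} \subseteq E_k$, that is $\overline{\{s \to t\}} \subseteq \overline{E_k}$. Since $\overline{\{s \to t\}}$ is the complete directed graph on $V(G)$, so is $\overline{E_k}$, and by the definition of transitive closure this means $E_k$ contains a path from $s$ to $t$. Because $E_k \subseteq E(G)$, that path lies in $G$, contradicting the assumption that $G$ has no path from $s$ to $t$. Hence $G'$ accepts no such $G$, i.e. $G'$ is sound.

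I expect the only place requiring genuine care is the monotonicity property of knowledge sets, and specifically the interaction between the $\subseteq$ relation, the union operation, and the special rule that promotes any knowledge set containing an $s$-$t$ path to the complete graph; once that is pinned down, the rest is a routine induction along the accepting path together with condition~1 at the two endpoints.
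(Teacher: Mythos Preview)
Your argument is correct. The paper states this proposition without proof (it is implicit from the definitions and presumably established in the cited reference~\cite{potechin}), so there is nothing to compare against directly; your induction along the accepting path, maintaining the invariant $K_{u'_i}\subseteq E_i$ and concluding that $\overline{E_k}$ must be the complete graph, is exactly the intended reading of ``certain knowledge'' and is the standard way to see soundness. The one point you flagged as delicate---monotonicity of $\subseteq$ under union, in the presence of the promote-to-complete rule---is handled correctly: your reduction to $\overline{K\cup K'}=\overline{\bar K\cup\bar K'}$ together with ordinary monotonicity of transitive closure covers all cases, including when one side triggers the $s$--$t$ promotion.
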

\begin{figure}\label{examplecertainknowledge}
\centerline{\includegraphics[height=5cm]{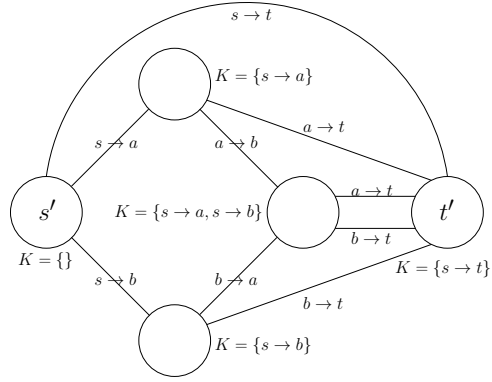}}
\caption{In this figure, we have a certain knowledge switching network $G'$ solving directed connectivity on 
$V(G) = \{s,t,a,b\}$ together with a valid certain-knowledge description for it.}
\end{figure}
\begin{proposition}\label{certainknowledgerules}
The condition that $K_{v'_2} \subseteq K_{v'_1} \cup \{v_1 \to v_2\}$ and 
$K_{v'_1} \subseteq K_{v'_2} \cup \{v_1 \to v_2\}$ 
is equivalent to the condition that we can obtain $K_{v'_2}$ from $K_{v'_1}$ 
using only the following reversible operations on a knowledge set $K$:\\
Operation 1: Add or remove $v_1 \to v_2$.\\
Operation 2: If $v_3 \to v_4,v_4 \to v_5$ are both in $K$ and $v_3 \neq v_5$, add or remove $v_3 \to v_5$.\\
Operation 3: If $s \to t$ is in $K$, add or remove any edge except $s \to t$.\\
If this condition is satisfied, we say we can get from $K_{v'_1}$ to $K_{v'_2}$ with the edge $v_1 \to v_2$.\\
Similarly, two knowledge sets $K_1$ and $K_2$ are equal if and only if we can obtain 
$K_2$ from $K_1$ using only operations 2 and 3.
\end{proposition}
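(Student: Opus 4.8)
The plan is to regard ``reachability via the operations'' as an equivalence relation — it is symmetric because each operation is undone by another operation of the same type (\emph{add} versus \emph{remove}) — and to show it coincides with the relation imposed by transitive closures. I would prove the ``equality'' assertion first, since the main equivalence reduces to it. Write $e = v_1 \to v_2$ and abbreviate $K_{v'_1}, K_{v'_2}$ as $K_1, K_2$. Two facts get isolated up front: (i) Operations 2 and 3 each leave $\bar K$ unchanged — Operation 2 because adding or deleting a transitively forced edge alters no reachability and forces no new $s\to t$ path, and Operation 3 because having $s\to t\in K$ already makes $\bar K$ complete; and (ii) $\overline{K\cup\{e\}}$ depends only on $\bar K$ and $e$, namely $\overline{K\cup\{e\}}=\overline{\bar K\cup\{e\}}$, which follows from $K\subseteq\bar K$ and the idempotence $\overline{\bar K}=\bar K$.

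For the equality assertion: if $K_1\to K_2$ using only Operations 2 and 3, fact (i) gives $\bar K_1=\bar K_2$, i.e.\ $K_1=K_2$. Conversely suppose $\bar K_1=\bar K_2$. If $s\not\leadsto t$ in $K_1$ (hence in $K_2$), then repeatedly applying Operation 2 to add transitively forced edges, in increasing order of shortest-path distance, never manufactures an $s\to t$ path and eventually turns $K_1$ literally into $\bar K_1$; doing the same to $K_2$ and reversing yields $K_1\to\bar K_1=\bar K_2\to K_2$. If instead $s\leadsto t$, then Operation 2 can produce the edge $s\to t$, after which Operation 3 makes every other edge freely editable; I would use this to morph $K_1$ into $K_2$ augmented by a few ``shortcut'' edges along an $s\leadsto t$ path of $K_2$, then strip those shortcuts and finally $s\to t$ itself back off using Operation 2 in reverse order of creation.

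For the main equivalence: if $K_1\to K_2$ via Operations 1, 2, 3, then Operation 1 only toggles $e$, so the literal set $K\cup\{e\}$ is unchanged by Operation 1 and (by (i) and (ii)) unchanged up to knowledge-set equality by Operations 2 and 3; hence $\overline{K_1\cup\{e\}}=\overline{K_2\cup\{e\}}$, which gives $K_2\subseteq K_1\cup\{e\}$ and $K_1\subseteq K_2\cup\{e\}$. Conversely, from those two containments and fact (ii) one checks $\overline{K_1\cup\{e\}}=\overline{K_2\cup\{e\}}$, i.e.\ $K_1\cup\{e\}=K_2\cup\{e\}$ as knowledge sets, and then chains $K_1\xrightarrow{\text{Op 1}}K_1\cup\{e\}\xrightarrow{\text{Ops 2,3}}K_2\cup\{e\}\xrightarrow{\text{Op 1}}K_2$, the middle step invoking the already-proved equality assertion. (The degenerate case $v_1=v_2$ reduces to the equality statement; throughout I take knowledge sets to contain no self-loops, since self-loops lie in no transitive closure.)

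I expect the main obstacle to be the $s\leadsto t$ subcase of ``$K_1=K_2\Rightarrow K_1$ reaches $K_2$'': Operation 3 becomes unavailable the instant $s\to t$ leaves the current edge set, so the sequence of operations must be arranged so that every edge deleted after $s\to t$ is discarded is a forced shortcut we introduced ourselves and hence can remove by Operation 2. That bookkeeping, together with checking that the $s\not\leadsto t$ case never accidentally creates an $s\to t$ path, is where the real care is needed; the transitive-closure identities are otherwise routine.
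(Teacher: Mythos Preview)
The paper does not prove this proposition at all: it is stated as a fact recalled from the earlier paper \cite{potechin} and no \texttt{proof} environment follows it. So there is no ``paper's own proof'' to compare against here; your proposal is filling in an argument the present paper omits.

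That said, your argument is correct and is the natural one. The two isolated facts (i) and (ii) are exactly what is needed, and the reduction of the main equivalence to the equality assertion via $K_1\xrightarrow{\text{Op 1}}K_1\cup\{e\}\xrightarrow{\text{Ops 2,3}}K_2\cup\{e\}\xrightarrow{\text{Op 1}}K_2$ is clean. Your handling of the delicate $s\leadsto t$ subcase is right: build shortcut edges $s\to u_i$ along a fixed $s$--$t$ path in $K_2$ via Operation~2 (so that, in particular, $s\to t$ appears), use Operation~3 freely to reach $K_2$ plus those shortcuts, then peel the shortcuts off in reverse order by Operation~2, removing $s\to t$ first while $s\to u_{\ell-1}$ and $u_{\ell-1}\to t$ are still present. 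The only thing I would make explicit is that in the $s\not\leadsto t$ case your induction on shortest-path length really does saturate to the literal edge set $\bar K_1$: for an edge $u\to w\in\bar K_1$ with shortest $K_1$-path of length $\ell\ge 2$ through some intermediate vertex $v$, both $u\to v$ and $v\to w$ have strictly shorter witnessing paths and are therefore already present when you process $u\to w$. With that remark the proof is complete.
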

\noindent We define the difficulty of a set of inputs for certain knowledge switching networks in 
a similar way.
\begin{definition}
Given a non-empty set of inputs $I$ of input graphs with vertex set $V(G)$, let $C(I)$ be the size of the smallest 
certain-knowledge switching network for directed connectivity on $V(G)$ which solves the set of inputs $I$.
\end{definition}
\begin{definition}
We say a family of sets of inputs $\mathcal{I} = \{I_n\}$ is easy for certain-knowledge 
switching networks if there is a polynomial $p(n)$ such that for all $n$, $C(I_n) \leq p(n)$. 
If not, then we say it is hard for certain knowledge switching networks.
\end{definition}
\noindent In this paper, we will often consider certain-knowledge switching networks which have a 
valid certain-knowledge description where all knowledge sets contain only edges of the form $s \to v$ for 
some $v \in V(G)$. Accordingly, we introduce the following definitions:
\begin{definition}
Given a set $V \subseteq V(G) \backslash \{s,t\}$, let $K_V$ be the knowledge set 
$\cup_{v \in V}{\{s \to v\}}$
\end{definition}
\begin{definition}\label{basiccertainknowledge}
Given a set of vertices $V(G)$ containing $s,t$, let $G'(V(G),m)$ be the certain-knowledge 
switching network with vertices $t' \cup \{v'_V: V \subseteq V(G), |V| \leq m\}$ and all labeled 
edges allowed by condition 2 of Definition \ref{certainknowledgedef}, where $v'_V$ has knowledge set $K_V$. Note 
that $s' = v'_{\{\}}$.
\end{definition}
\begin{example}
The certain-knowledge switching network shown in Figure 3 is $G'(\{s,t,a,b\},2)$.
\end{example}
\begin{definition}
Given a non-empty set of inputs $I$ of input graphs with vertex set $V(G)$, let $sc(I)$ be the size of the smallest 
$m$ such that $G'(V(G),m)$ solves the set of inputs $I$.
\end{definition}
\subsection{Our results}
\noindent We are now ready to properly state our results. We define inputs as follows:
\begin{definition}\label{augmentedinputsdef}
Let $G_0$ be a graph on a vertex set $V(G_0)$ with distinguished vertices $s,t$ and let $V(G)$ be a set of vertices 
which also contains $s$ and $t$.\\
If $\{s,t\},V_0,L,R$ are disjoint subsets of $V(G)$, $V(G) = V_0 \cup L \cup R \cup \{s,t\}$, and $\phi: V_0 \cup \{s,t\} \to V(G_0)$ 
is a one-to-one map with $\phi(s) = s$ and $\phi(t) = t$, then let $G(G_0,V_0,L,R,\phi)$ be the graph with $V(G(G_0,V_0,L,R,\phi)) = V(G)$ and \\
$E(G(G_0,V_0,L,R,\phi)) = \{s \to v: v \in L\} \cup \{v \to t: v \in R\} \cup 
\{v \to w: (\phi(v),\phi(w)) \in E(G_0)\}$
\end{definition}
\begin{figure}\label{exampleinput}
\centerline{\includegraphics[height=10cm]{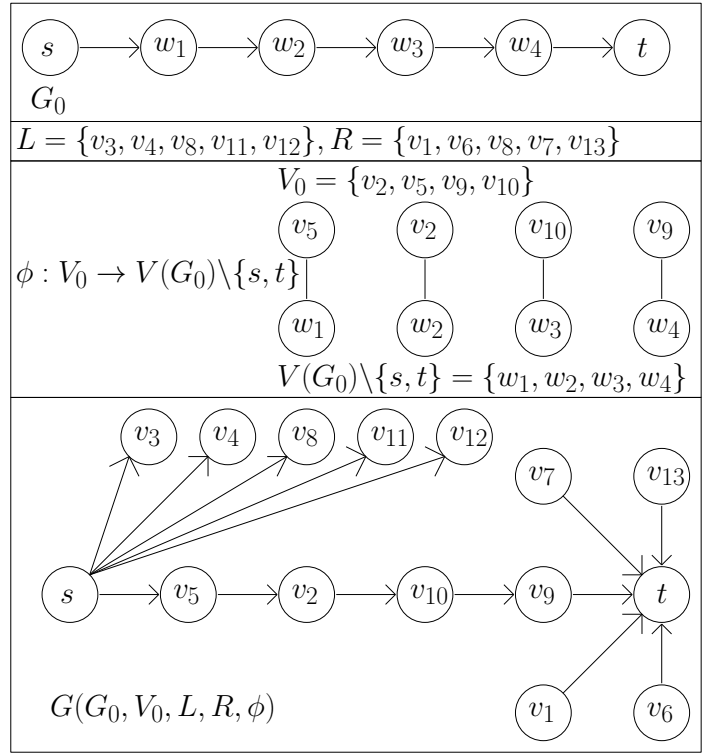}}
\caption{In this figure, we have the input graph of the form $G(G_0,V_0,L,R,\phi)$ for the given $G_0,V_0,L,R,\phi$}
\end{figure}
\noindent In Section \ref{easyinputs}, we prove the following result, which shows that even for certain-knowledge switching networks, 
edges of the form $s \to v, v \in L$ can make the input easier to solve.
\begin{theorem}\label{aneasyinput}
If $G_0 = P$ is a path of length $k+1$ from $s$ to $t$ where $k$ is a constant and if $V(G)$ is a set of vertices of size $N$
containing $s,t$, then letting $I$ be the set of inputs of the form $G(P,V_0,L,\emptyset,\phi)$, $C(I)$ is at most $O(N\lg{N})$ 
(where the constants depend on $k$).
\end{theorem}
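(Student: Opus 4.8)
\emph{Overall strategy.} In every input $G=G(P,V_0,L,\emptyset,\phi)$ with $|V_0|=k$, label the interior vertices so that the unique $s$–$t$ path is $s\to w_1\to w_2\to\cdots\to w_k\to t$. The key feature to exploit is that $s$ has an edge to \emph{every} vertex of $V(G)$ except $s,t,w_2,\dots,w_k$ (namely to all of $L$ and to $w_1$): there are only $k-1$ vertices of $V(G)\setminus\{s,t\}$ that are not direct successors of $s$, and those $k-1$ vertices, together with the two known vertices $s,t$, carry the whole path. Inputs with $|V_0|<k$ contain no $s$–$t$ path, so they lie in $I_R$; since every certain-knowledge network is sound, they are discarded for free, and we only ever need completeness on the $|V_0|=k$ inputs. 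For the same reason it suffices to write down a \emph{valid} certain-knowledge description — soundness never has to be argued.

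\emph{The case $k=2$.} Here there is exactly one non-successor of $s$, namely $w_2$, so I would fix an ordering of $V(G)\setminus\{s,t\}$ and, for each dyadic interval $C$ of it, introduce a vertex $v'_C$ with knowledge set $K_{(V(G)\setminus\{s,t\})\setminus C}$ (``$s$ reaches everything outside $C$''), with $v'_{V(G)\setminus\{s,t\}}=s'$. Between $v'_C$ and each child $v'_{C'}$ I put a chain of $O(|C|)$ vertices that adds the edges $s\to v$, $v\in C\setminus C'$, one at a time; each step is Operation~1 of Proposition~\ref{certainknowledgerules}, so the chain is legal, and it can be traversed in $G$ exactly when none of those $s\to v$ is missing, i.e. exactly when $w_2\in C'$. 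Thus on any fixed input the network descends the dyadic tree, always staying in the half containing $w_2$, reaching the leaf $v'_{\{w_2\}}$, whose knowledge contains $s\to w_1$. From each leaf $v'_{\{x\}}$ I add an edge labelled $x\to t$ to a vertex $M_x$ with knowledge $K_{V(G)\setminus\{s,t,x\}}\cup\{x\to t\}$, and from each $M_x$, for every $v\ne s,t,x$, an edge labelled $v\to x$ going \emph{directly to} $t'$; this last edge is legal because from $K_{V(G)\setminus\{s,t,x\}}\cup\{x\to t\}$ one adds $v\to x$ (Op.~1), derives $s\to x$ from $s\to v,v\to x$ (Op.~2), derives $s\to t$ from $s\to x,x\to t$ (Op.~2), and collapses to $\{s\to t\}=K_{t'}$ (Op.~3). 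The successful route on $G$ is: descend to $v'_{\{w_2\}}$, take $w_2\to t$ to $M_{w_2}$, then take the edge labelled $w_1\to w_2$ to $t'$. The dyadic tree contributes $\sum_{\text{levels}}O(N)=O(N\lg N)$ vertices, the $M_x$ contribute $O(N)$, and the final edges contribute none, so the size is $O(N\lg N)$.

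\emph{General constant $k$, and the main obstacle.} For $k\ge 3$ there are $k-1$ vertices $w_2,\dots,w_k$ that are not successors of $s$, and they need not lie in one short dyadic interval, so the naive descent gets stuck the moment two of them separate into different halves. The plan is an induction on $k$ (base case $k=1$ being the $O(N)$ network $G'(V(G),1)$): given an $O(N\lg N)$ network for paths of length $k$, build one for length $k+1$ by using the same ``collect a half'' gadget to pin down the vertex adjacent to $t$ and peel off that end of the path, then invoke the shorter construction on what remains, with the single ``jump to $t'$'' edge of the $k=2$ case replaced by a short chain ($\le k$ edges) of Op.-2/Op.-3 edges that walks along the hidden tail $w_1\to w_2\to\cdots\to w_k\to t$. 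The delicate point — the step I expect to be hardest — is to run the descent so that the mutual positions of $w_2,\dots,w_k$ do not multiply the cost: all intermediate knowledge sets should stay of the form ``$s$ reaches everything outside $C$'' for a \emph{single} interval $C$, and one must show the residual branching still fits in $O_k(N\lg N)$, with the constant absorbing the dependence on $k$. Once that bookkeeping is done, correctness on $I_A$ and validity of the certain-knowledge description (hence soundness) follow from Operations~1–3 exactly as in the $k=2$ case.
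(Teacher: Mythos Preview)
Your $k=2$ argument is correct and quite different from the paper's: you locate the unique non-successor $w_2$ by a deterministic dyadic search, reaching the leaf knowledge set $K_{V(G)\setminus\{s,t,w_2\}}$ in $O(N\lg N)$ steps, and then finish with $w_2\to t$ and $w_1\to w_2$. That is a valid certain-knowledge network of the claimed size.

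The gap is the general-$k$ step. You acknowledge that the dyadic descent stalls as soon as two of $w_2,\dots,w_k$ fall into different halves, and you propose to ``pin down the vertex adjacent to $t$'' and recurse. But you never say how to pin down $w_k$ without paying a factor of $N$. The only edge that distinguishes $w_k$ is $w_k\to t$; testing it means branching on all $N-2$ candidates for $w_k$, and each branch must then carry a full copy of the length-$k$ construction on $V(G)\setminus\{t\}$ (with $w_k$ playing the role of the target and the knowledge sets shifted by $\{w_k\to t\}$). That recursion gives $O_k(N^{k-1}\lg N)$, not $O_k(N\lg N)$. Alternatively, trying to keep ``$s$ reaches everything outside a single interval $C$'' fails for the reason you yourself identify: once the $k-1$ bad vertices split, no legal edge lets you shrink $C$ further. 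Your sentence ``once that bookkeeping is done'' is hiding the entire difficulty; the bookkeeping as described does not close.

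The paper avoids this by abandoning the search metaphor entirely. It picks a \emph{random linear ordering} $v_1,\dots,v_{N-2}$ and builds the single chain $K_{\{v_1\}},K_{\{v_1,v_2\}},\dots,K_{V(G)\setminus\{s,t\}}$; moving from $K_{V_{i-1}}$ to $K_{V_i}$ uses $s\to v_i$ unless $v_i=w_j$ for some $j\ge 2$, in which case it uses $w_{j-1}\to w_j$, and this is available precisely when $w_{j-1}$ already appeared, i.e.\ when $i_{j-1}<i_j$. So one chain of $N$ vertices accepts a given input whenever the path vertices happen to appear in the order $w_1,\dots,w_k$, which has probability $1/k!$. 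Repeating with $O(k!\cdot k\lg N)$ well-chosen orderings (probabilistic method) covers every input, for total size $O_k(N\lg N)$. This handles all $k$ uniformly with no search, no branching, and no induction --- the step your argument is missing is precisely this ``random linear order plus probabilistic method'' idea.
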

\noindent However, the idea used in the proof does not work if the inputs have both edges of the form $s \to v, v \in L$ and edges of the form 
$v \to t, v \in R$. In Section \ref{weirdconstruction}, we use Fourier analysis to describe more sophisticated techniques which 
can use edges of the form $s \to v, v \in L$ and edges of the form $v \to t, v \in R$ with equal effectiveness. Using these techniques, 
we prove the following result:
\begin{theorem}\label{efficientnetwork}
If $I_0 = \cup_{i}{\{G_{0i}\}}$ is a set of input graphs with vertex set $V(G_0)$, all of which contain 
a path from $s$ to $t$, then given a set of vertices $V(G)$ containing $s,t$, letting 
$k = |V(G_0) - 2|$, $N = |V(G)|$, and $m = sc(I_0)$, if we let $I$ be the set of all inputs of the form 
$G(G_{0i},V_0,L,R,\phi)$, then\\
$M(I) \leq 2^{(5m+3)}k^{(3m+2)}{N^3}{\lg{N}}$
\end{theorem}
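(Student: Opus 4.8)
The plan is to build the network explicitly and then analyze it in the Fourier-analytic style of Potechin~\cite{potechin}: lay out the states and labelled edges, check completeness on $I$ by lifting accepting paths, and certify soundness by attaching to each state a vector read off from a Fourier expansion of reachability data, chosen so that the vectors at $s'$ and at $t'$ cannot be joined by edges consistent with any input that has no path from $s$ to $t$. The state count at the end will produce the claimed bound.

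First I would fix the object to imitate. By hypothesis $G'(V(G_0),m)$ solves $I_0$, so for each $G_{0i}$ there is a path in $G'(V(G_0),m)$ from $s' = v'_{\{\}}$ to $t'$ whose edges are all edges of $G_{0i}$; along this path the states are the $v'_V$ with $V \subseteq V(G_0)\setminus\{s,t\}$ and $|V|\le m$, and consecutive knowledge sets differ by the operations of Proposition~\ref{certainknowledgerules}. There are only $2^{O(m)}k^{O(m)}$ such $V$. The obstruction to a direct lift is twofold: replacing each $K_V$ by $K_{\tilde V}$ for an arbitrary $\tilde V \subseteq V(G)\setminus\{s,t\}$ of size $\le m$ gives $\Theta(N^m)$ states, far above the target; and the network is not allowed to know $\phi$, $L$, or $R$.

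The key move --- this is where Section~\ref{weirdconstruction} is needed --- is to keep only the $2^{O(m)}k^{O(m)}$ ``shape'' states of the core and to connect them through routing gadgets assembled from the edges $s\to v$ ($v\in L$) and $v\to t$ ($v\in R$). What is proved there is that these two families of edges can be exploited with equal effectiveness: an $s\to v$ edge lets the network transport a fragment of ``$s$-side knowledge'' onto $v$, and by a Fourier symmetry a $w\to t$ edge lets it transport a fragment of ``$t$-side knowledge'' onto $w$, so composing the two shortcuts a route that in $G_{0i}$ would have to thread many internal vertices. Concretely I would build, for each $G_{0i}$ and each way of splitting its accepting path in $G'(V(G_0),m)$ into an $s$-side prefix and a $t$-side suffix, a gadget reminiscent of the $O(N\lg N)$-size solver of Theorem~\ref{aneasyinput} joining the two corresponding shape states, using an $N^2$ choice of which vertex of $L$ and which vertex of $R$ to route through; taking the union over $i$ and over splits gives the whole network, and counting the shapes, splits, and gadget-internal states gives at most $2^{5m+3}k^{3m+2}N^3\lg N$ states. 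Completeness on $I$ is then routine: given $G = G(G_{0i},V_0,L,R,\phi)$, transport the accepting path of $G'(V(G_0),m)$ on $G_{0i}$ through $\phi$, observe that each core step is mirrored by an internal edge $\phi^{-1}(v)\to\phi^{-1}(w)$ of $G$ and each gadget step uses an edge $s\to v$ ($v\in L$) or $v\to t$ ($v\in R$) of $G$, and conclude that $s'$ and $t'$ are joined in the new network by a path consistent with $G$.

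The main obstacle is soundness, which must hold against \emph{every} input with no path from $s$ to $t$, not merely the graphs in $I$. The shortcut states do not record honest reachability, so one cannot argue soundness combinatorially as for a certain-knowledge network; instead, following Potechin~\cite{potechin}, I would attach to every state a vector in the space spanned by the ``cut'' data of non-reachable inputs, expand it in the Fourier basis, and verify coefficient by coefficient that each labelled edge moves the vector only within the subspace permitted by its label while the $s'$-vector and the $t'$-vector are kept apart. Getting these Fourier vectors right for the teleportation gadgets --- in particular checking that the symmetric treatment of $L$ and $R$ edges does not accidentally create a spurious $s$-to-$t$ route on some non-reachable input --- is the technical heart of the argument and the step I expect to be hardest.
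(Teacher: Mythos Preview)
Your plan has the architecture inverted relative to the paper, and that inversion hides the key idea.

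In the paper, states are \emph{defined} as functions $h:\mathcal{C}\to\mathbb{R}$ and the network is $G'(H)$ of Definition~\ref{switchingnetworkconstruction}: an edge with label $e$ is present between $v'$ and $w'$ iff $h_{v'}(C)=h_{w'}(C)$ for every cut $C$ not crossed by $e$. With that definition soundness is a one-line observation (Proposition~3.6), not ``the technical heart''. All the work goes into \emph{completeness}: exhibiting, for each input $G(G_{0i},V_0,L,R,\phi)$, a walk through $H$ from $-e_{\{\}}$ to $e_{\{\}}$ using only edges of $G$. Your proposal builds a combinatorial network first and only afterwards tries to find Fourier vectors certifying soundness; that is strictly harder, because you would have to reverse-engineer vectors matching whatever ad hoc gadget states you laid down, and there is no reason such vectors exist for a generic gadget.

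More seriously, you are missing the mechanism that actually makes the $L$ and $R$ edges useful. The paper does not ``route through an $N^2$ choice of a vertex of $L$ and a vertex of $R$''. It fixes a small family $Q=\{Q_r\}$ of partitions of $V(G)\setminus\{s,t\}$ into $k$ blocks $(V_1,\dots,V_k)$ and, for each nonempty $I\subseteq[1,k]$ with $|I|\le m$, mimics the knowledge set $K_{\{w_i:i\in I\}}$ by the function
\[
e_{\{\}}-2^{1-|I|}\sum_{J\subseteq I}(-1)^{|J|}\Bigl(\prod_{i\in J}\theta(V_i)\Bigr)\,e_{\cup_{i\in J}V_i},
\]
i.e.\ each core vertex $w_i$ is replaced by an entire block $V_i$, with the sign $\theta(V_i)$ recording $L$/$R$ membership. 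Lemmas~\ref{startlemma}--\ref{switchinggears} show that the edges $s\to v$ ($v\in L$) and $v\to t$ ($v\in R$) let you slide one vertex at a time into or out of the blocks and switch between partitions, so you can simulate each step of $G'(V(G_0),m)$. A probabilistic argument then shows $|Q|\le 2k(4k)^m\lg N$ partitions suffice so that for every placement of the $v^*_i$ and every $I$ some $Q_r$ isolates the $v^*_i$ in their blocks; counting the intermediate functions $h_n,a_n$ (indexed by a partition, a subset $I$, sign data, one or two distinguished vertices, and an index $n\le N-2$) gives the $2^{5m+3}k^{3m+2}N^3\lg N$ bound. None of this structure --- partitions, block-replacement functions, the sign $\theta$ --- appears in your outline, and the Theorem~\ref{aneasyinput}-style gadget you gesture at cannot handle $R$-edges, which was exactly the point of Section~\ref{weirdconstruction}.
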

\noindent Finally, in Section \ref{certainknowledgefails}, we show the following lower bound on certain knowledge switching networks:
\begin{theorem}\label{certainknowledgelowerbound}
If $G_0 = P$ is a path of length $k+1$ from $s$ to $t$, $V(G)$ is a set of vertices of size $N$
containing $s,t$, and $N \geq 10k^2$, then letting $I$ be the set of inputs of the form $G(P,V_0,L,R,\phi)$ and letting \\
$m = 1 + \lfloor{\lg{k}}\rfloor$, $C(I) \geq \frac{1}{2}(\frac{N}{2k(k + \lg{(kN)})})^m$.
\end{theorem}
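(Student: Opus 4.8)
The plan is to adapt the recursive lower bound that Potechin~\cite{potechin} used for certain-knowledge switching networks on genuine path inputs to the present setting, where the new difficulties are that the length-$(k+1)$ path may occupy any $k$ of the $N$ vertices in any order and that the inputs carry the distractor edges $s\to v$ ($v\in L$) and $v\to t$ ($v\in R$).

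Fix a sound certain-knowledge switching network $G'$ on $V(G)$ solving $I$, together with a valid certain-knowledge description $v'\mapsto K_{v'}$; we bound $|V(G')|$ from below. For an input $G=G(P,V_0,L,R,\phi)$ write its path as $s\to x_1\to\cdots\to x_k\to t$, and let $W_G$ be an accepting walk in $G'$ from $s'$ to $t'$; by Proposition~\ref{certainknowledgerules} the knowledge sets along $W_G$ go from $\{\}$ to $\{s\to t\}$ by the three reversible operations, Operation~1 being applied only with labels in $E(G)$. Relative to $G$ I would use the progress measure $\pi_G(K)=$ the largest $i$ with $s\to x_i\in\bar K$ (taken to be $0$ if there is no such $i$ and $k+1$ if $\bar K$ is complete). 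Two facts drive everything: (i) $\pi_G$ is a function of $\bar K$, so Operations~2 and~3 never change it; and (ii) no vertex $v\in L$ has an out-edge in $G$ and no vertex $v\in R$ has an in-edge in $G$, and (by induction along $W_G$) no knowledge set on $W_G$ short of ``done'' contains any fact $a\to v$ with $v\in R$, so adjoining an $L$- or $R$-edge to a knowledge set creates no new transitive consequence among the $x_i$; hence traversing a distractor edge leaves $\pi_G$ unchanged. Consequently $\pi_G$ can increase only along the $k+1$ genuine path edges, by at most $1$ at a time, and only past $i$ once $s\to x_i$ is already known.

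The core is a recursion on the path length. I would prove that a certain-knowledge network solving the length-$(k+1)$ family ``contains'' at least $N/\bigl(2k(k+\lg(kN))\bigr)$ vertex-disjoint sub-networks, each solving a length-$(\lfloor k/2\rfloor+1)$ family of the same shape; iterating this $\lfloor\lg k\rfloor$ times down to a base case (path length $2$) and multiplying sizes gives the claimed $\tfrac12\bigl(N/(2k(k+\lg(kN)))\bigr)^{m}$ with $m=1+\lfloor\lg k\rfloor$. To extract the sub-networks one fixes the ``tail'' of the path — distinct vertices $a_{\lfloor k/2\rfloor+1},\dots,a_k$ and $t$ — and considers the inputs of $I$ whose path ends $x_{\lfloor k/2\rfloor}\to a_{\lfloor k/2\rfloor+1}\to\cdots\to a_k\to t$, with the first $\lfloor k/2\rfloor$ path vertices and the sets $L,R$ still free; for each such input $W_G$ first attains $\pi_G=\lfloor k/2\rfloor$ at a vertex whose knowledge set must truthfully record $s\to x_{\lfloor k/2\rfloor}$, and the part of $G'$ reachable from $s'$ without that fact yields a network solving the shorter ``first-half'' family (with variable target $x_{\lfloor k/2\rfloor}$), over which we recurse. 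The base case, $k=1$, is a direct counting argument: after restricting to a subfamily in which the reachable knowledge sets can truthfully commit to only $O(k+\lg(kN))$ vertices, each vertex of $G'$ can serve as the ``$\pi_G=1$'' vertex for at most that many of the $N-2$ possible path vertices, and all of them must be served, so $|V(G')|\ge(N-2)/\bigl(2k(k+\lg(kN))\bigr)$ up to the stated slack.

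The main obstacle will be the disjointness in the recursive step. Certain-knowledge walks are reversible and may wander across all progress levels, and one vertex of $G'$ typically lies on accepting walks of enormously many inputs; so ``the sub-network used for the first half when the tail is $a_{\lfloor k/2\rfloor+1},\dots,a_k$'' must be defined by an intrinsic property of knowledge sets — essentially which facts among $\{s\to a_j\}$ and $\{a_i\to a_j\}$ a knowledge set does and does not contain — rather than by the dynamics of a single walk, and one has to verify both that this property genuinely separates the sub-networks belonging to different tails and that each induced sub-network is itself a valid certain-knowledge network solving the smaller family. Carrying the distractor sets $L,R$ through this bookkeeping (they enlarge $E(G')$ and permit extra facts $s\to v$, $v\in L$, inside knowledge sets, which is exactly why one passes to a restricted subfamily at each level), and tracking the floors together with the constants $2k$ and $k+\lg(kN)$ across the $m$ levels, is the remaining technical work.
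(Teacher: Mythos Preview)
Your plan takes a genuinely different route from the paper's, and the gaps you yourself flag are real and unresolved.

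The paper's proof is a one-shot counting argument with no recursion at all. It invokes Lemma~3.17 of \cite{potechin} once to conclude that every accepting walk for an input $G(P,V_0,L,R,\phi)$ must visit some vertex $a'$ whose knowledge set involves at least $m=1+\lfloor\lg k\rfloor$ of the path vertices in $V_0$; call such a knowledge set \emph{useful} for that input. It then bounds, for each fixed knowledge set $K$, the probability over a uniformly random input that $K$ is useful, via a dichotomy on $x=|V\setminus\{s,t\}|$, where $V$ is the set of endpoints of edges of $K$. If $x\ge k+m\lg N$ then at least $m\lg N$ vertices of $V$ land in $L\cup R$, and for $K$ to be consistent with the input each such $v$ must fall on the correct side of the independent $L/R$ coin flip, which happens with probability at most $2^{-m\lg N}=N^{-m}$. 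If $x<k+m\lg N$ then a hypergeometric estimate shows that a random $k$-subset $V_0$ meets $V$ in at least $m$ points with probability at most $2\bigl(2k(k+\lg(kN))/N\bigr)^m$. Summing over the vertices of $G'$ gives the bound immediately.

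Against this, your recursion is both unnecessary and incomplete. Unnecessary, because Lemma~3.17 already delivers the ``depth~$m$'' requirement in a single stroke; your base case is essentially the paper's entire argument specialised to $m=1$, so the recursion only reproduces what that lemma hands you for free. Incomplete, because the disjointness of sub-networks for different tails is never established: in the distractor-free setting a knowledge set on an accepting walk can mention only path vertices, so the tail can be read off intrinsically, but here a knowledge set may legitimately mention any vertex of $V(G)$ via distractor edges, and the ``intrinsic property'' you sketch does not separate sub-networks belonging to different tails. Moreover, the factor $k+\lg(kN)$ in the bound arises precisely from the $L/R$ randomness forcing useful knowledge sets to be small; in your plan this is pushed into an unspecified ``restricted subfamily at each level,'' but producing that restriction and showing it survives across all $m$ levels \emph{is} the content you are deferring. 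The paper sidesteps both problems by never recursing.
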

\noindent We then compare Theorems \ref{efficientnetwork} and \ref{certainknowledgelowerbound} and show the following corollary:
\begin{corollary}\label{separationcorollary}
There is a family of sets of inputs $\mathcal{I}$ such that $\mathcal{I}$ is hard for certain-knowledge switching networks 
but $\mathcal{I}$ is easy for monotone switching networks for directed connectivity.
\end{corollary}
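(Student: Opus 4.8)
The plan is to instantiate Theorems \ref{efficientnetwork} and \ref{certainknowledgelowerbound} with a single path as $G_0$ whose length is allowed to grow slowly with the number $N$ of vertices. Concretely, for each sufficiently large $N$ I would take $k = k(N) = \lceil \lg N \rceil$, let $P$ be the path of length $k+1$ from $s$ to $t$, fix an $N$-element vertex set $V(G) \ni s,t$, and let $I_N$ be the set of all inputs of the form $G(P,V_0,L,R,\phi)$; for the finitely many small $N$ that violate the side conditions used below I would simply set $I_N$ to a single trivial accepting input, so the family $\mathcal{I} = \{I_N\}$ is defined for every $N$. With this choice $k = \Theta(\lg N)$ and $m := 1 + \lfloor \lg k\rfloor = \Theta(\lg\lg N)$, and the whole argument is a comparison of the two bounds in these ranges.

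For the upper bound I would apply Theorem \ref{efficientnetwork} with $I_0 = \{P\}$, so that its parameter equals $m_0 = sc(\{P\})$. The key sub-claim is $sc(\{P\}) = O(\lg k)$: in $G'(V(P),m_0)$ the moves available along the edges of $P$ are exactly those of the reversible pebble game on the line $u_1, u_2, \dots, u_k$ (toggling a pebble on $u_{i+1}$ requires $u_i$ pebbled, while $u_1$ may be toggled freely since $s$ is its only predecessor), and the reversible pebbling number of a path on $k$ vertices is $\lceil\lg k\rceil + O(1)$ by the standard Bennett-style recursion — this is likely also recoverable from the analysis behind Theorem \ref{aneasyinput}. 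Plugging $m_0 = O(\lg\lg N)$ and $k = \Theta(\lg N)$ into $M(I) \le 2^{5m_0+3} k^{3m_0+2} N^3 \lg N$ gives $2^{5m_0+3} = (\lg N)^{O(1)}$ and $k^{3m_0+2} = 2^{O((\lg\lg N)^2)} = 2^{o(\lg N)} = N^{o(1)}$, hence $M(I_N) = N^{3+o(1)} = O(N^4)$; so $\mathcal{I}$ is easy for monotone switching networks for directed connectivity.

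For the lower bound I would invoke Theorem \ref{certainknowledgelowerbound}, whose hypothesis $N \ge 10k^2 = 10\lceil\lg N\rceil^2$ holds for all large $N$. With $m = 1 + \lfloor\lg k\rfloor$ it yields $C(I_N) \ge \frac12\bigl(\frac{N}{2k(k+\lg(kN))}\bigr)^m$. Since $k = \Theta(\lg N)$ we have $k + \lg(kN) = \Theta(\lg N)$, hence $2k(k+\lg(kN)) = \Theta((\lg N)^2)$ and the base is at least $N^{1/2}$ once $N$ is large, while $m = 1 + \lfloor\lg\lg N\rfloor \to \infty$. Therefore $C(I_N) \ge \frac12 N^{(\lg\lg N)/2}$, which eventually exceeds every fixed polynomial in $N$, so $\mathcal{I}$ is hard for certain-knowledge switching networks. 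Together with the previous paragraph this gives the family required by Corollary \ref{separationcorollary}.

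I expect the only delicate point to be choosing $k(N)$ inside the correct window: it must tend to infinity so that the exponent $m = \Theta(\lg k)$ in the certain-knowledge lower bound makes $\bigl(N^{1-o(1)}\bigr)^{m}$ super-polynomial, yet grow slowly enough — $(\lg k)^2 = o(\lg N)$ suffices — that the factor $k^{3m_0+2} = 2^{\Theta((\lg k)^2)}$ in the monotone upper bound of Theorem \ref{efficientnetwork} stays $N^{o(1)}$; the choice $k(N) = \lceil\lg N\rceil$ sits comfortably in this range, and in fact anything from $\lg\lg N$ up to roughly $2^{\sqrt{\lg N}}$ would work. Everything else is routine estimation, except for confirming $sc(\{P\}) = O(\lg k)$, which is the standard reversible pebbling bound for a path.
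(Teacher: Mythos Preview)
Your proposal is correct and follows essentially the same route as the paper: instantiate both theorems with $G_0$ a path of length $k+1$, use $sc(\{P\}) = \lfloor\lg k\rfloor + 1$ (which the paper obtains from Lemmas~3.12 and~3.17 of \cite{potechin}, in agreement with your reversible-pebbling argument), and let $k$ grow with $N$ so that the monotone upper bound stays polynomial while the certain-knowledge lower bound becomes super-polynomial. The only cosmetic difference is the choice of parameter---the paper takes $k = 2^{\Theta(\sqrt{\lg N})}$ whereas you take $k = \lceil\lg N\rceil$---and, as you yourself note, any $k$ in a wide window (roughly $\lg\lg N$ up to $2^{\sqrt{\lg N}}$) works equally well.
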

\noindent In other words, monotone switching networks for directed connectivity are strictly more powerful than certain knowledge 
switching networks.
\subsection{Notation and conventions}
In this paper, we follow the notation and conventions of Potechin \cite{potechin}. Throughout the paper, we use 
lower case letters (i.e. $a, e, f$) to denote vertices, edges, and functions, and we use upper case letters 
(i.e. $G, V, E$) to denote graphs and sets of vertices and edges. We use unprimed symbols to denote vertices, 
edges, etc. in the directed graph $G$, and we use primed symbols to denote vertices, edges, etc. in the 
switching network $G'$.\\
In this paper, we do not allow graphs to have loops or multiple edges from one vertex to another. 
When a graph has loops or multiple edges from one vertex to another we use the term multi-graph instead. 
We take all paths to be simple (i.e. we do not allow paths to have repeated vertices or edges).\\ 
\section{An easy set of inputs}\label{easyinputs}
\noindent It may seem that edges of the form $s \to v$ or $v \to t$ for vertices $v$ which are not on the path from $s$ to $t$ are irrelevant 
and should not make it easier for monotone switching networks to solve the input. However, as we will show, this is not the case.
In this section, we prove Theorem \ref{aneasyinput}, showing that such edges can in fact be useful 
even for certain-knowledge switching networks. We recall the statement of Theorem \ref{aneasyinput} below.
\vskip.1in
\noindent
{\bf Theorem \ref{aneasyinput}.}
{\it
If $G_0 = P$ is a path of length $k+1$ from $s$ to $t$ where $k$ is a constant and if $V(G)$ is a set of vertices of size $N$
containing $s,t$, then letting $I$ be the set of inputs of the form $G(P,V_0,L,\emptyset,\phi)$, $C(I)$ is at most $O(N\lg{N})$ 
(where the constants depend on $k$).
}
\begin{proof}
\noindent The case $k = 0$ is trivial, so we assume that $k \geq 1$.\\
Consider the following procedure for building a certain knowledge switching 
network $G'$:\\
1. Choose an ordering $v_1, \cdots, v_{N-2}$ of the vertices $V(G) \backslash \{s,t\}$\\ 
2. For each $i \in [1,N-2]$, let $V_i = \cup_{j=1}^{i}{v_j}$ and add a vertex with knowledge set 
$K_{V_i}$ to $G'$. We take $V_0 = \{\}$ and $K_{V_0} = \{\}$, so $s'$ is the vertex in $G'$ 
with vertex set $K_{V_0}$.\\
3. Add all edges allowed by condition 2 of Definition \ref{certainknowledgedef} to G'.\\
Clearly, any such $G'$ is sound. We will now show that on average such a $G'$ solves a 
constant fraction of the possible inputs $G(P,V_0,L,\emptyset,\phi)$
\begin{proposition}
Let $w_1, \cdots, w_k$ be the vertices $V(P) \backslash \{s,t\}$. Given an input graph 
of the form $G(P,V_0,L,\emptyset,\phi)$, let $i_j$ be the index such that $\phi(v_{i_j}) = w_j$.\\ 
If $i_1 < i_2 < \cdots < i_k$ then the switching network $G'$ accepts the input 
$G(P,V_0,L,\emptyset,\phi)$.
\end{proposition}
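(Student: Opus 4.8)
The plan is to exhibit, for an input $G(P,V_0,L,\emptyset,\phi)$ with $i_1 < i_2 < \cdots < i_k$, an explicit path in $G'$ from $s' = v'_{V_0}$ to $t'$ whose edge labels are all edges of the input graph. The key observation is that the input graph contains the edges $s \to v$ for every $v \in L$, plus the image of the path $P$: writing $w_0 = s$, $w_{k+1} = t$, and $v_{i_0} = s$, $v_{i_{k+1}} = t$ (with the convention that $s,t$ correspond to the endpoints of $P$), the input contains $v_{i_0} \to v_{i_1} \to \cdots \to v_{i_k} \to v_{i_{k+1}}$. So I have two kinds of moves available: ``free'' edges $s \to v_j$ for any $v_j \in L$, and the $k+1$ ``path'' edges $v_{i_{j}} \to v_{i_{j+1}}$.

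First I would describe the walk through the knowledge sets. Since $G'$ contains a vertex $v'_{V_\ell}$ for each prefix $V_\ell = \{v_1,\dots,v_\ell\}$, and the edge-availability is exactly condition 2 of Definition~\ref{certainknowledgedef}, I can move from $v'_{V_{\ell-1}}$ to $v'_{V_\ell}$ using the edge $s \to v_\ell$ whenever $v_\ell \in L$ (since $K_{V_\ell} = K_{V_{\ell-1}} \cup \{s \to v_\ell\}$, and symmetrically $K_{V_{\ell-1}} \subseteq K_{V_\ell} \cup \{s\to v_\ell\}$), and I can move along that same edge in reverse at no cost. The point of the hypothesis $i_1 < \cdots < i_k$ is that as I sweep $\ell$ upward through $1,\dots,N-2$, I encounter $v_{i_1},\dots,v_{i_k}$ in exactly the order $w_1,\dots,w_k$ sit along $P$. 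So the strategy is: sweep $\ell$ from $0$ upward; each time $v_\ell \in L$ take the free edge $s\to v_\ell$; each time $\ell = i_j$ for some $j$, I am at a knowledge set that already contains $s \to v_{i_1}, \dots, s\to v_{i_{j-1}}$, hence (by Operation~2, transitive closure) contains $s \to w_{j-1}$; now use the path edge $w_{j-1} \to w_j$, i.e. $v_{i_{j-1}} \to v_{i_j}$, which lets me traverse an edge between $v'_{V_{i_j-1}}$-type knowledge and a knowledge set that additionally ``knows'' $s\to w_j$. I would make this precise by checking, via Proposition~\ref{certainknowledgerules} (Operations~1--3), that each consecutive pair of knowledge sets along the walk differs by a single reversible application using the relevant input edge as the Operation-1 edge.

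The endgame: once $\ell$ has passed $i_k$, the current knowledge set contains $s \to w_1, \dots, s \to w_k$, hence contains $s \to w_k$; applying the path edge $w_k \to t$ (which is $v_{i_k} \to t$, an edge of the input since $\phi$ maps it to the last edge of $P$) produces a knowledge set containing $s \to t$, i.e. a knowledge set equal to $K_{t'} = \{s\to t\}$; by Operation~3 all such knowledge sets are equivalent, so $G'$ contains an edge to $t'$ labeled $v_{i_k}\to t$ from the vertex I currently occupy, or I can first walk (using free/reverse moves, which are always available) to whichever such vertex is adjacent to $t'$. Finally, any vertices of $L$ with index larger than $i_k$ can simply be skipped (stop sweeping) or absorbed harmlessly, since once $s\to t$ is derivable Operation~3 trivializes everything. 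Thus the path reaches $t'$ with all labels consistent with the input, proving $G'$ accepts it.

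The main obstacle I expect is bookkeeping rather than a conceptual hurdle: I must verify that at the moment I want to fire the path edge $v_{i_{j-1}} \to v_{i_j}$, there really is an edge of $G'$ labeled with it connecting the knowledge set I currently hold to a knowledge set also present in $G'$ — in particular that the intermediate ``knowledge sets'' I pass through (which temporarily contain an edge $s \to w_j$ in addition to some prefix $K_{V_\ell}$, or equivalently reduce to such a prefix after Operation~2) are among the vertices actually added in Step~2 of the construction. This may require threading the walk carefully through the prefix vertices $v'_{V_\ell}$, using reverse free moves to ``park'' at the right prefix before each path-edge firing, and using the transitive-closure operations to argue the firing is legal; none of it is deep, but it is the part where I would be most careful about off-by-one indexing and about the distinction between a knowledge set and its equivalence class.
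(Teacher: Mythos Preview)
Your proposal is correct and follows essentially the same approach as the paper: walk through the prefixes $K_{V_0}, K_{V_1}, \ldots$ using $s \to v_\ell$ when $v_\ell \in L$ and the path edge $v_{i_{j-1}} \to v_{i_j}$ when $\ell = i_j$, then finish with $v_{i_k} \to t$. The concern you flag in the last paragraph is not actually an obstacle: when you fire $v_{i_{j-1}} \to v_{i_j}$ from $K_{V_{i_j-1}}$ you land directly on the prefix vertex $K_{V_{i_j}}$ (both inclusions of Definition~\ref{certainknowledgedef} hold because $s \to v_{i_{j-1}} \in K_{V_{i_j-1}}$ gives $s \to v_{i_j}$ in the closure), so no non-prefix intermediate states ever arise and no ``parking'' or reverse moves are needed.
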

\begin{proof} 
\noindent Note that for any $i \in [1,N-2]$ we can move from 
$K_{V_{i-1}}$ to $K_{V_{i}}$ using the edge $s \to v_i$ unless $i = i_j$ for some $j > 1$. 
However, if $i_{j-1} < i_j$ then $v_{i_{j-1}} \in V_{i-1}$ so we can move from 
$K_{V_{i-1}}$ to $K_{V_{i}}$ using the edge $v_{i_{j-1}} \to v_{i_j}$. We just need to check 
that we can get from $K_{V_{N-2}}$ to $K = \{s \to t\}$ using an edge in 
$G(P,V_0,L,\emptyset,\phi)$. However, this is clear, as $G(P,V_0,L,\emptyset,\phi)$ contains 
an edge $v \to t$ for some $v \in V_{N-2} = V(G) \backslash \{s,t\}$ so we can use the edge 
$v \to t$ to get from $K_{V_{N-2}}$ to $K = \{s \to t\}$.
\end{proof}
\noindent Thus, such a $G'$ accepts any given $G(P,V_0,L,\emptyset,\phi)$ with 
probability at least $\frac{1}{k!}$. Now note that instead of using this construction only once, 
we can use it repeatedly, adding the new vertices to $G'$ each time. Each time we use this 
construction and add the new vertices to $G'$, on average $G'$ will accept at least $\frac{1}{k!}$ 
of the inputs $G(P,V_0,L,\emptyset,\phi)$ that it did not accept before. By the probabilistic method, 
there is always a choice for the ordering $v_1, \cdots, v_{N-2}$ of the vertices 
$V(G) \backslash \{s,t\}$ which will make $G'$ accept at least $\frac{1}{k!}$ 
of the inputs $G(P,V_0,L,\emptyset,\phi)$ that it did not accept before. There are less then $N^k$ distinct 
inputs of the form $G(P,V_0,L,\emptyset,\phi)$, so we can create a certain-knowledge switching network accepting all 
such inputs by repeating this construction at most $1 + \log_{(1 - \frac{1}{k!})^{-1}}{(N^k)} = 1 + 
k\lg{N}\log_{(1 - \frac{1}{k!})^{-1}}{2} \leq 2(k!)k\lg{N}$ times, giving a $G'$ of size at most 
$2N(k!)k\lg{N}$.
\end{proof}
\begin{remark} 
From the proof of Theorem 1.3 of Potechin \cite{potechin}, any certain-knowledge switching network which accepts all paths of length $k+1$ must have 
size at least $\Omega(N^{\lfloor{\lg{k}}\rfloor+1})$. By Corollary 5.22 and Theorem 6.1 of Potechin \cite{potechin}, any sound monotone 
switching network for directed connectivity which accepts all paths of length $k+1$ must have size at least 
$\Omega(N^{\frac{\lfloor{\lg{k}}\rfloor+1}{2}})$. Thus, the edges $\{s \to v: v \in L\}$ make it much easier for both certain-knowledge 
switching networks and monotone switching networks for directed connectivity to solve these inputs.
\end{remark}
\section{An upper bound for monotone switching networks for directed connectivity}\label{weirdconstruction}
\noindent We have just shown that certain-knowledge switching networks can effectively use edges of the form $\{s \to v: v \in L\}$. However, 
it seems much harder for a certain-knowledge switching network to use both edges of the form $\{s \to v: v \in L\}$ and 
edges of the form $\{v \to t: v \in R\}$, and we will show in Section \ref{certainknowledgefails} that this is indeed the case. 
In this section, we show that surprisingly, a monotone switching networks for directed connectivity can effectively use both edges of the form 
$\{s \to v: v \in L\}$ and edges of the form $\{v \to t: v \in R\}$.
\noindent We will use the viewpoint of Potechin \cite{potechin} of looking at everything in terms 
of possible cuts of the input graph $G$. Accordingly, we recall the following definitions and facts 
from Potechin \cite{potechin}. 
\begin{definition}
We define an s-t cut (below we use cut for short) of $G$ to be a partition of $V(G)$ into subsets $L(C),R(C)$ such that 
$s \in L(C)$ and $t \in R(C)$. We say an edge $v_1 \to v_2$ crosses $C$ if $v_1 \in L(C)$ and $v_2 \in R(C)$. 
Let $\mathcal{C}$ denote the set of all cuts $C$. $|\mathcal{C}| = 2^{N-2}$, where $N = |V(G)|$.
\end{definition}
\begin{definition}
Given two functions 
$f,g: \mathcal{C} \to \mathbb{R}$, we define $f \cdot g = 2^{2-N}\sum_{C \in \mathcal{C}}{f(C)g(C)}$
\end{definition}
\begin{definition}
Given a set of vertices $V \subseteq V(G)$ that does not include $s$ or $t$, define 
$e_V(C) = {(-1)^{|V \cap L(C)|}}$.
\end{definition}
\begin{proposition}
The set $\{e_V, V \subseteq V(G), s,t \notin V\}$ is an orthonormal basis for the space of functions from 
$\mathcal{C}$ to $\mathbb{R}$. 
\end{proposition}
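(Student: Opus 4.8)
The plan is to recognize $\{e_V\}$ as the standard Fourier character basis on a Boolean cube and to verify the two ingredients of the claim --- orthonormality and spanning --- directly. First I would record the bijection between $\mathcal{C}$ and the subsets of $V(G) \setminus \{s,t\}$: a cut $C$ is completely determined by $L(C) \cap (V(G) \setminus \{s,t\})$, since $s \in L(C)$ and $t \in R(C)$ are forced. Hence $|\mathcal{C}| = 2^{N-2}$, which is exactly the dimension of the space of real-valued functions on $\mathcal{C}$. Since there is one function $e_V$ for each of the $2^{N-2}$ admissible sets $V$, it suffices to prove that these $2^{N-2}$ functions are orthonormal, because an orthonormal set of full size is automatically a basis.

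For orthonormality, fix admissible $V, W$ and compute $e_V \cdot e_W = 2^{2-N}\sum_{C \in \mathcal{C}} (-1)^{|V \cap L(C)| + |W \cap L(C)|}$. Using the set identity $|V \cap L(C)| + |W \cap L(C)| = |(V \triangle W) \cap L(C)| + 2|(V \cap W) \cap L(C)|$, the exponent is congruent modulo $2$ to $|(V \triangle W) \cap L(C)|$, so $e_V \cdot e_W = 2^{2-N}\sum_{C \in \mathcal{C}}(-1)^{|(V \triangle W) \cap L(C)|}$. If $V = W$ then $V \triangle W = \emptyset$ and this sum is $\sum_{C} 1 = 2^{N-2}$, giving $e_V \cdot e_V = 1$. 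If $V \neq W$, pick a vertex $u \in V \triangle W$ and pair each cut $C$ with the cut $C^u$ obtained by moving $u$ to the opposite side; since $u \notin \{s,t\}$, this is a fixed-point-free involution on $\mathcal{C}$, and $|(V \triangle W) \cap L(C)|$ and $|(V \triangle W) \cap L(C^u)|$ differ by exactly $1$ (they differ only in whether $u$ is counted), so the two corresponding terms cancel. Summing over all such pairs gives $0$, hence $e_V \cdot e_W = 0$.

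I do not expect a real obstacle here: the only points requiring care are the modulo-$2$ bookkeeping with symmetric differences and invoking the dimension count ($2^{N-2}$ orthonormal vectors in a $2^{N-2}$-dimensional space) correctly, so that orthonormality alone yields ``basis'' without a separate spanning argument. If one prefers to avoid that shortcut, the same involution argument shows that for an arbitrary $f : \mathcal{C} \to \mathbb{R}$ the expansion $f = \sum_V (f \cdot e_V)\, e_V$ holds, by checking that both sides agree at each cut $C$, which gives spanning explicitly.
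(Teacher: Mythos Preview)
Your argument is correct: the bijection between cuts and subsets of $V(G)\setminus\{s,t\}$, the dimension count, and the symmetric-difference/involution computation of $e_V\cdot e_W$ are all valid, and together they give exactly the claim. The paper itself states this proposition without proof, treating it as the standard fact that the $\pm 1$ characters form an orthonormal basis for functions on the Boolean cube $\{0,1\}^{N-2}$; your write-up is precisely the usual verification of that fact, so there is nothing to contrast.
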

\noindent Potechin \cite{potechin} took a given monotone switching network for directed 
connectivity and used Fourier analysis to analyze it. Here, we will use suitably defined functions 
from $\mathcal{C}$ to $\mathbb{R}$ to create our switching network.
\begin{definition}\label{switchingnetworkconstruction}
Given a set of functions $H = \{h_{s'}, h_{v'_1}, \cdots, h_{v'_{N'-2}}, h_{t'}\}$ from 
$\mathcal{C}$ to $\mathbb{R}$ where $h_{s'}(C) = -1$ for all cuts $C$ and $h_{t'}(C) = 1$ 
for all cuts $C$, define the switching network $G'(H)$ to have vertices 
$V(G'(H)) = \{s',t',v'_1,\cdots,v'_{N'-2}\}$.\\
For each vertex $v' \in V(G'(H))$ we define $v': \mathcal{C} \to \mathbb{R}$ to be 
$v'(C) = h_{v'}(C)$.\\
For each pair of vertices $v',w'$ in $G'(H)$ and each possible edge $e$ between two vertices 
of $G$, create an edge $e'$ with label $e$ between $v'$ and $w'$ if and only if $v'(C) = w'(C)$ 
for all cuts $C$ which are not crossed by $e$.
\end{definition}
\begin{proposition}
Any monotone switching network for directed connectivity constructed in this way is sound.
\end{proposition}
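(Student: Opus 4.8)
The plan is the standard ``monochromatic cut'' argument, now adapted to the function-valued vertices of Definition \ref{switchingnetworkconstruction}. Suppose $G$ is an input graph on $V(G)$ with no directed path from $s$ to $t$; I must show that $G'(H)$ does not accept $G$. First I would exhibit a cut witnessing the non-connectivity: let $L(C)$ be the set of vertices reachable from $s$ in $G$ (including $s$ itself), and let $R(C) = V(G) \backslash L(C)$. Since there is no path from $s$ to $t$, we have $t \in R(C)$, so $C \in \mathcal{C}$ is a genuine cut; and since $L(C)$ is closed under taking out-neighbours in $G$, no edge of $G$ crosses $C$.

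Next I would chain along a hypothetical accepting path. Assume for contradiction that $G'(H)$ accepts $G$, so there is a path $s' = u'_0, u'_1, \dots, u'_\ell = t'$ in $G'(H)$ all of whose edge labels are consistent with $G$. Because $G'(H)$ is monotone by construction, these labels are edges $e_1, \dots, e_\ell \in E(G)$ (no negated literals occur). For each $i$, the edge $e_i$ does not cross $C$ (since no edge of $G$ does), so by the edge-creation rule of Definition \ref{switchingnetworkconstruction}, namely that an edge labeled $e_i$ is placed between $u'_{i-1}$ and $u'_i$ only when $u'_{i-1}(C) = u'_i(C)$ for every cut $C$ not crossed by $e_i$, we conclude $h_{u'_{i-1}}(C) = h_{u'_i}(C)$. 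Telescoping over $i = 1, \dots, \ell$ gives $h_{s'}(C) = h_{t'}(C)$, which contradicts $h_{s'}(C) = -1$ and $h_{t'}(C) = 1$. Hence no such accepting path exists, and $G'(H)$ is sound.

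There is essentially no hard step here. The only things to get right are that the cut $C$ coming from the reachable set genuinely has no crossing edge, and that the edge condition ``$v'(C) = w'(C)$ for all $C$ not crossed by $e$'' is exactly what propagates the value $h_{v'}(C)$ unchanged along every edge of the path. The one subtlety worth a sentence is monotonicity: we need every label along the accepting path to be a positive literal $e \in E(G)$ rather than a negation, so that the single cut $C$ blocks all $\ell$ edges simultaneously; this is immediate because $G'(H)$ only ever has labels of the form $e$ by construction.
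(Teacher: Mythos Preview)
Your proof is correct and is essentially the same argument as the paper's, just presented in contrapositive form: the paper assumes an accepting path and deduces that every cut is crossed by some edge of $G$ (hence $G$ has an $s$--$t$ path), whereas you start from a specific uncrossed cut (the reachability cut) and telescope the equality $h_{u'_{i-1}}(C)=h_{u'_i}(C)$ to a contradiction. The underlying idea---that the edge-creation rule forces values at any uncrossed cut to be constant along the path, while $h_{s'}(C)\neq h_{t'}(C)$---is identical.
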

\begin{proof}
Assume we have a path from $s'$ to $t'$ in $G'(H)$ using only the edges of some input graph 
$G$. Since for each cut $C$, $t'(C) = 1$ and $s'(C) = -1$, there must be some edge 
$e'$ in this path with endpoints $v',w'$ such that $w'(C) \neq v'(C)$. But then by definition, 
if $e$ is the label of $e'$ then $e$ must cross the cut $C$. Thus, for all cuts $C$, $E(G)$ contains 
an edge $e$ crossing $C$ so there must be a path from $s$ to $t$ in $G$.
\end{proof}
\begin{definition}
Let $H$ be a set of functions from $\mathcal{C}$ to $\mathbb{R}$. If $f,g \in H$, 
we say that we can go from $f$ to $g$ with the edge $e$ if $(g - f)(C) = 0$ for 
all cuts $C$ which are not crossed by $e$. We say that we can reach $g$ from $f$ using the set of edges $E$ if there is a 
sequence of functions $h_0, \cdots, h_k$ from $\mathcal{C}$ to $\mathbb{R}$ such that $h_0 = f$, $h_k = g$, and for all $i$ 
$h_i \in H$ and we can get from $h_i$ to $h_{i+1}$ with some edge $e \in E$.
\end{definition}
\begin{proposition}
If $H$ is a set of functions from $\mathcal{C}$ to $\mathbb{R}$ containing the functions $h_{s'} = -e_{\{\}}$ and 
$h_{t'} = -e_{\{\}}$ and if $f,g \in H$, then let $v'_1,v'_2$ be the vertices corresponding to $f,g$ in $G'(H)$. 
We can go from $f$ to $g$ with the edge $e$ if and only if there is an edge in $G'(H)$ with label $e$ between $v'_1$ and $v'_2$. 
Similarly, we can reach $g$ from $f$ using the set of edges $E$ if and only if there is a path from $v'_1$ to $v'_2$ in $G'(H)$ which 
uses only the edges in $E$.
\end{proposition}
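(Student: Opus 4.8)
The plan is to prove both equivalences by directly unwinding the definitions, the only non-formal ingredient being the standard fact that a walk in a graph can be pruned to a simple path between the same endpoints.

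First I would handle the single-edge statement. By Definition \ref{switchingnetworkconstruction}, each vertex $u'$ of $G'(H)$ is identified with the function $u'(C) := h_{u'}(C)$, and $G'(H)$ has an edge with label $e$ between $v'_1$ and $v'_2$ exactly when $v'_1(C) = v'_2(C)$ for every cut $C$ not crossed by $e$. Since $f \in H$ is the function $h_{v'_1} = v'_1$ and $g \in H$ is the function $h_{v'_2} = v'_2$, that condition reads $(g-f)(C) = 0$ for every such $C$, which is verbatim the definition of being able to go from $f$ to $g$ with the edge $e$. So this direction is just matching two definitions; I expect nothing is needed beyond that observation.

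Next I would do the path statement, using the single-edge statement as a black box together with the bijection between $H$ and $V(G'(H))$, so that a function $h_i$ appearing in a reachability sequence names a unique vertex $u'_i$, with $u'_0 = v'_1$ and $u'_k = v'_2$. For the forward direction, a witnessing sequence $h_0 = f, \dots, h_k = g$ in which each step $h_i \to h_{i+1}$ is realized by some $e_i \in E$ translates, via the single-edge statement, into a walk $u'_0, \dots, u'_k$ in $G'(H)$ from $v'_1$ to $v'_2$ all of whose edges carry labels in $E$; deleting closed sub-loops from this walk leaves a simple path from $v'_1$ to $v'_2$ using a subset of those edges, hence still only edges with labels in $E$. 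For the converse, a simple path $v'_1 = u'_0, \dots, u'_k = v'_2$ whose edges all have labels in $E$ yields the sequence $h_i := h_{u'_i}$ (with $h_0 = f$, $h_k = g$), and the single-edge statement again certifies that each step $h_i \to h_{i+1}$ is realizable by the corresponding $e_i \in E$, so the sequence witnesses reachability of $g$ from $f$ using $E$.

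The only point that needs any care — and the closest thing to an obstacle — is the walk-versus-simple-path mismatch: the definition of reachability allows the sequence $h_0, \dots, h_k$ to repeat functions (or to take trivial steps $h_i = h_{i+1}$), whereas paths in this paper are simple by convention, so one must invoke the pruning fact above in the forward direction. I would also note that the hypothesis $h_{s'}, h_{t'} \in H$ is not actually used in this proposition; it merely pins down the intended reading of "the vertices corresponding to $f$ and $g$."
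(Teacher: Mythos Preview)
Your proposal is correct. The paper gives no proof of this proposition at all---it is stated as immediate from the definitions---so your explicit unwinding is exactly what is intended, and your attention to the walk-versus-simple-path issue is more care than the paper itself takes.
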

\noindent It is useful to know when we can go from one function $f$ to another function $g$ with a given edge $e$. We answer this 
question with the following proposition.
\begin{proposition}\label{keyproposition}
Let $h$ be a function $h: \mathcal{C} \to \mathbb{R}$. Let $v_1,v_2$ be vertices of $G$ which are not $s$ or $t$.\\ 
1. $h(C) = 0$ for all cuts $C$ which cannot be crossed by the edge $s \to v_1$ if and only if $h$ has the form \\
$\sum_{V \subseteq V(G) \backslash \{s,t,v_1\}}{c_V(e_V + e_{(V \cup \{v_1\})})}$\\
2. $h(C) = 0$ for all cuts $C$ which cannot be crossed by the edge $v_1 \to t$ if and only if $h$ has the form \\
$\sum_{V \subseteq V(G) \backslash \{s,t,v_1\}}{c_V(e_V - e_{(V \cup \{v_1\})})}$\\
3. $h(C) = 0$ for all cuts $C$ which cannot be crossed by the edge $v_1 \to v_2$ if and only if $h$ has the form \\
$\sum_{V \subseteq V(G) \backslash \{s,t,v_1,v_2\}}{c_V(e_V - e_{(V \cup \{v_1\})} + e_{(V \cup \{v_2\})} - e_{(V \cup \{v_1,v_2\})})}$\\
\end{proposition}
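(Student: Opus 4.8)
The plan is to work in the orthonormal Fourier basis $\{e_V : V \subseteq V(G)\setminus\{s,t\}\}$ and reduce each of the three ``cannot be crossed'' conditions to a linear constraint on Fourier coefficients. Each claim is an equivalence, so I would prove the two directions separately. For the ``if'' direction I would simply evaluate the stated building blocks on the relevant cuts: in part~1, a cut $C$ is not crossed by $s \to v_1$ exactly when $v_1 \in L(C)$, in which case $|(V\cup\{v_1\})\cap L(C)| = |V\cap L(C)|+1$, so $e_{V\cup\{v_1\}}(C) = -e_V(C)$ and every term $c_V(e_V + e_{V\cup\{v_1\}})$ vanishes at $C$. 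Parts~2 and~3 are the same one-line check: for $v_1 \to t$ the non-crossed cuts are those with $v_1 \in R(C)$, where $e_{V\cup\{v_1\}}(C) = e_V(C)$; for $v_1 \to v_2$ the non-crossed cuts are those with $v_1 \in R(C)$ \emph{or} $v_2 \in L(C)$, and in either case the four basis functions $e_V, e_{V\cup\{v_1\}}, e_{V\cup\{v_2\}}, e_{V\cup\{v_1,v_2\}}$ cancel in pairs.

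For the ``only if'' direction I would use a dimension count. The building blocks in part~1, indexed by $V \subseteq V(G)\setminus\{s,t,v_1\}$, are linearly independent because each one is the unique block containing the basis vector $e_V$ with $v_1 \notin V$; hence their span has dimension $2^{N-3}$. On the other hand, the space of functions $\mathcal{C} \to \RR$ that vanish on all cuts not crossed by $s\to v_1$ is exactly the space supported on the $2^{N-3}$ cuts with $v_1 \in R(C)$, which also has dimension $2^{N-3}$. Together with the ``if'' direction (span $\subseteq$ vanishing space), this forces the span to equal the vanishing space, giving the claimed form. Parts~2 and~3 are identical: for $v_1 \to t$ both spaces have dimension $2^{N-3}$ (there are $2^{N-3}$ cuts with $v_1 \in L(C)$), and for $v_1 \to v_2$ both have dimension $2^{N-4}$ (the edge crosses exactly the $2^{N-4}$ cuts with $v_1 \in L(C)$ and $v_2 \in R(C)$, and there are $2^{N-4}$ blocks).

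If one wants the coefficients $c_V$ explicitly rather than via dimension count, there is a hands-on alternative. The set of cuts crossed by an edge $e$ is a subcube of $\mathcal{C}$ whose $\{0,1\}$-valued indicator function is $\tfrac12(1 + e_{\{v_1\}})$ for $s\to v_1$, $\tfrac12(1 - e_{\{v_1\}})$ for $v_1 \to t$, and $\tfrac14(1 - e_{\{v_1\}})(1 + e_{\{v_2\}})$ for $v_1 \to v_2$. Since $h$ vanishes off the cuts crossed by $e$ iff $h$ equals its pointwise product with this indicator, and since the pointwise product satisfies $e_V e_W = e_{V \triangle W}$, one expands this identity in the Fourier basis after grouping the coefficients of $h$ into orbits under symmetric difference with $\{v_1\}$ (or with $\{v_1\}$ and $\{v_2\}$). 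The resulting fixed-point equations read $d_V = d_{V\cup\{v_1\}}$ in part~1, $d_V = -d_{V\cup\{v_1\}}$ in part~2, and $d_V = -d_{V\cup\{v_1\}} = d_{V\cup\{v_2\}} = -d_{V\cup\{v_1,v_2\}}$ in part~3, which is exactly the stated shape with $c_V := d_V$.

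There is no genuine obstacle here; this is finite linear algebra. The only things to watch are (i) matching ``cannot be crossed by $e$'' to the correct half-cube of $\mathcal{C}$ --- in particular $s \to v_1$ is crossed precisely by the cuts with $v_1 \in R(C)$ because $s$ is always in $L(C)$, while $v_1 \to t$ is crossed precisely by the cuts with $v_1 \in L(C)$ --- and (ii) in part~3, correctly tracking all four of $V, V\cup\{v_1\}, V\cup\{v_2\}, V\cup\{v_1,v_2\}$, which occur across different blocks simultaneously.
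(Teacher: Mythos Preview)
Your proof is correct and close in spirit to the paper's, but the packaging differs in two small ways worth noting. For parts~1 and~2 the paper restricts $h$ to the subcube of cuts with $v_1$ on the fixed side, computes the Fourier expansion of the restricted function, and reads off the constraint $a_V = b_V$ (resp.\ $a_V = -b_V$) directly; you instead verify the ``if'' direction pointwise and close the ``only if'' direction by a dimension count (or alternatively by your indicator-multiplication argument, which is the cleanest of the three and recovers the coefficients explicitly). For part~3 the paper observes that the set of cuts not crossed by $v_1 \to v_2$ is the union of those not crossed by $s \to v_2$ and those not crossed by $v_1 \to t$, so vanishing on it is the conjunction of the two simpler constraints already established; you instead handle part~3 from scratch by the same dimension-count or indicator method. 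Both routes are short and essentially equivalent; the paper's reduction for part~3 is a touch slicker because it avoids redoing any linear algebra, while your indicator-function approach has the advantage of treating all three cases uniformly and yielding the coefficients $c_V$ without a separate existence argument.
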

\begin{proof}
We prove claim 1 as follows. Let $\mathcal{C}_{red}$ be the set of possible cuts $C_{red}$ of $V(G) \backslash \{v_1\}$. Given a function 
$h: \mathcal{C} \to \mathbb{R}$, we define the function $h_{red}: \mathcal{C}_{red} \to \mathbb{R}$ to be 
$h_{red}(C_{red}) = h(C)$ where $L(C) = L(C_{red}) \cup \{v_1\}$ and $R(C) = R(C_{red})$.\\
Writing $h = \sum_{V \subseteq V(G) \backslash \{s,t,v_1\}}{({a_V}{e_V} + {b_V}{e_{(V \cup \{v_1\})}})}$, we have that 
$h_{red} = \sum_{V \subseteq V(G) \backslash \{s,t,v_1\}}{(a_V - b_V)e_V}$\\
$h(C) = 0$ for all cuts which cannot be crossed by the edge $s \to v_1$ if and only if $h_{red} = 0$, which is true if and only if 
$a_V = b_V$ for all $V \subseteq V(G) \backslash \{s,t,v_1\}$, which is true if and only $h$ has the given form, and this completes the 
proof.\\
Claim 2 can be proved in a similar way. To prove claim 3, note that $h(C) = 0$ for all cuts $C$ which cannot be crossed by the 
edge $v_1 \to v_2$ if and only if $h(C) = 0$ for all cuts $C$ which cannot be crossed by the edge $s \to v_2$ and $h(C) = 0$ for all 
cuts $C$ which cannot be crossed by the edge $v_1 \to t$. Using claims 1 and 2, it is easily verified that this is true if and only if 
$h$ has the given form.
\end{proof}
\subsection{Steps in the Fourier basis}\label{buildingblocks}
In this subsection, we give examples of what monotone switching networks for directed connectivity can do with the edges 
$s \to v, v \in L$ and $v \to t, v \in R$. We begin with the following simple construction, which illustrates that it is 
relatively easy to use these edges to move around in the Fourier basis.
\begin{proposition}
Let $V = \{v_1, \cdots, v_m\}$ be a non-empty set of vertices with $V \subseteq V(G) \backslash \{s,t\}$ and let 
$V_i = \{v_1, \cdots, v_i\}$. If $H$ is a set of functions containg $h'_{s'} = -e_{\{\}}$ and all of the functions 
$\{\pm{e_{V_i}}, i \in [1,m]\}$ and we have a set of edges $E$ such that for all $i$, $s \to v_i \in E$ or $v_i \to t \in E$, 
then we can reach either $e_{V_m}$ or $-e_{V_m}$ from $-e_{\{\}}$ using the set of edges $E$.
\end{proposition}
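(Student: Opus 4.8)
The plan is a one-variable induction on $i$: I will show that for each $i \in [0,m]$ one can reach some function $\epsilon_i e_{V_i}$, with $\epsilon_i \in \{+1,-1\}$, from $-e_{\{\}}$ using only edges of $E$ of the form $s\to v_j$ or $v_j\to t$ with $j\le i$ (here $V_0 = \{\}$). The base case $i=0$ requires nothing: $\epsilon_0 e_{V_0} = -e_{\{\}}$ is the function $h_{s'}\in H$, so take $\epsilon_0 = -1$. The statement of the proposition is the case $i = m$, which yields $\epsilon_m e_{V_m}\in\{e_{V_m},-e_{V_m}\}$.

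For the inductive step, assume we have reached $\epsilon_{i-1}e_{V_{i-1}}$. Since the $v_j$ are distinct, $v_i\notin V_{i-1}$, so $V_{i-1}\subseteq V(G)\backslash\{s,t,v_i\}$ and $V_i = V_{i-1}\cup\{v_i\}$. Now I split on the hypothesis. If $s\to v_i\in E$, set $\epsilon_i = -\epsilon_{i-1}$; then
\[
\epsilon_i e_{V_i} - \epsilon_{i-1}e_{V_{i-1}} \;=\; -\epsilon_{i-1}\bigl(e_{V_{i-1}} + e_{V_i}\bigr),
\]
which by part~1 of Proposition~\ref{keyproposition} (with $c_{V_{i-1}} = -\epsilon_{i-1}$ and all other coefficients $0$) vanishes on every cut not crossed by $s\to v_i$, so we can go from $\epsilon_{i-1}e_{V_{i-1}}$ to $\epsilon_i e_{V_i}$ with the edge $s\to v_i$. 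Otherwise $v_i\to t\in E$; set $\epsilon_i = \epsilon_{i-1}$; then
\[
\epsilon_i e_{V_i} - \epsilon_{i-1}e_{V_{i-1}} \;=\; -\epsilon_{i-1}\bigl(e_{V_{i-1}} - e_{V_i}\bigr),
\]
which by part~2 of Proposition~\ref{keyproposition} vanishes on every cut not crossed by $v_i\to t$, so we can go from $\epsilon_{i-1}e_{V_{i-1}}$ to $\epsilon_i e_{V_i}$ with the edge $v_i\to t$.

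The one point that has to be watched carefully — and the reason the hypothesis demands \emph{both} $e_{V_i}$ and $-e_{V_i}$ lie in $H$, not merely one of them — is that every function appearing along the chain is a member of $H$: the first, $-e_{\{\}}$, is $h_{s'}\in H$ by hypothesis, and for $i\ge 1$ the function $\epsilon_i e_{V_i}$ is one of $\pm e_{V_i}$ and hence in $H$ no matter which sign $\epsilon_i$ the chosen edge forces on us. Granting this, the sequence $-e_{\{\}}, \epsilon_1 e_{V_1}, \dots, \epsilon_m e_{V_m}$ witnesses that $\epsilon_m e_{V_m}$ is reachable from $-e_{\{\}}$ using edges in $E$, proving the proposition. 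I do not expect a genuine obstacle here; the entire content is choosing $\epsilon_i$ correctly at each step and matching the two displayed differences against the two forms of Proposition~\ref{keyproposition}.
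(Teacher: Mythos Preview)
Your proof is correct and follows essentially the same inductive approach as the paper: both argue by induction on $i$, split on whether $s\to v_i$ or $v_i\to t$ lies in $E$, and invoke the two cases of Proposition~\ref{keyproposition} to make the transition from $\pm e_{V_{i-1}}$ to $\pm e_{V_i}$. Your write-up is in fact slightly more careful than the paper's, since you explicitly verify that every intermediate function lies in $H$ and write out the differences rather than simply asserting the transitions.
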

\begin{proof}
We prove this result by induction. The base case $m = 1$ is trivial. Assume that we can reach either 
$e_{V_i}$ or $-e_{V_i}$ from $-e_{\{\}}$ using the set of edges $E$. By assumption, $E$ contains either the 
edge $s \to v_{i+1}$ or the edge $v_{i+1} \to t$. If $E$ contains the edge $e = s \to v_{i+1}$ then by Proposition 
\ref{keyproposition} we can go from $e_{V_i}$ to $-e_{V_{i+1}}$ with the edge $e$ and we can 
go from $-e_{V_i}$ to $e_{V_{i+1}}$ with the edge $e$, so the result follows. Similarly, 
if $E$ contains the edge $e = v_{i+1} \to t$ then by Proposition 
\ref{keyproposition} we can go from $e_{V_i}$ to $e_{V_{i+1}}$ with the edge $e$ and we can 
go from $e_{V_i}$ to $e_{V_{i+1}}$ with the edge $e$, so the result follows.
\end{proof}
\noindent We now give several more complicated examples of what monotone switching networks for directed connectivity can do with the edges 
$s \to v, v \in L$ and $v \to t, v \in R$. These examples are motivated by the following idea. Potechin \cite{potechin} associates 
each knowledge set $K$ with the function $K: \mathcal{C} \to \mathbb{R}$ where $K(C) = 1$ if 
there is an edge $e \in K$ crossing $C$ and $0$ otherwise. In particular, \\
$K_V = e_{\{\}} - 2^{(1 - |V|)}\sum_{U \subseteq V}{(-1)^{|U|}e_{U}}$\\
The idea is to mimic these functions with the vertices of $V$ replaced by subsets of vertices.\\
For the rest of this subsection, we will use the following setup:\\
Let $s, t, v_1, \cdots, v_{N-2}$ be the vertices of $V(G)$, let $V_1,V_2,\cdots,V_k$ be disjoint subsets of $V(G) \backslash \{s,t\}$, 
and let $I$ be a non-empty subset of $[1,k]$. Assume that for each $i \in I$ we have a distinguished vertex $v^{*}_i \in V_i$. 
Let $(L,R)$ be a partition of the vertices of $(\cup_{i \in I}{V_i}) \backslash \{s,t\} \backslash (\cup_{i \in I}{\{v^{*}_{i}\}})$ 
and assume that we have a set of functions $H$ from $\mathcal{C}$ to $\mathbb{R}$ 
which contains the functions $h_{s'} = -e_{\{\}}$, $h_{t'} = e_{\{\}}$ and a set of edges $E$ which contains all of the edges 
$s \to v, v \in L$ and $v \to t, v \in R$.
\begin{definition}
Given a subset $V \subseteq V(G)$, define $\theta(V) = (-1)^{(L \cap V)}$. For all $i$, 
define $V_{i0} = \{v^{*}_i\}$ and define $V_{in} = \{v^{*}_i\} \cup (V_i \cap (\cup_{l \leq n}{\{v_l\}}))$
\end{definition}
\begin{lemma}\label{startlemma}
Let $j$ be an element of $I$ and take $I_{red} = I \backslash \{j\}$. If $E$ contains the edge $s \to v^{*}_j$ and $H$ contains the function 
$f = e_{\{\}} - 2^{(1 - |I_{red}|)}\sum_{J \subseteq I_{red}}{(-1)^{|J|}{(\prod_{i \in J}{\theta(V_i)})}e_{(\cup_{i \in J}{V_i})}}$, the function 
$g = e_{\{\}} - 2^{(1 - |I|)}\sum_{J \subseteq I}{(-1)^{|J|}{(\prod_{i \in J}{\theta(V_i)})}e_{(\cup_{i \in J}{V_i})}}$, and
all functions \\
$h_n = e_{\{\}} - 2^{(1 - |I|)}\sum_{J \subseteq I_{red}}{(-1)^{|J|}{(\prod_{i \in J}{\theta(V_i)})}
(e_{(\cup_{i \in J}{V_i})} - \theta(V_{jn})e_{((\cup_{i \in J}{V_i}) \cup V_{jn})})}, n \in [0,N-2]$, \\
then we can reach $g$ from $f$ using the edges of $E$.
\end{lemma}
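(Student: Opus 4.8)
The plan is to produce, explicitly, a chain of functions in $H$ joining $f$ to $g$, namely $f, h_0, h_1, \dots, h_{N-2}$ with $h_{N-2}=g$, and to check that each consecutive pair can be traversed with a single edge of $E$ by appealing to Proposition~\ref{keyproposition}. The intuition is that the functions $h_n$ interpolate between $f$ and $g$ by "growing" the distinguished block $V_{jn}$ one vertex at a time, starting from $V_{j0}=\{v^*_j\}$ and ending at $V_{j,N-2}=V_j$, and that each growth step is realizable with one of the cheap edges $s\to v$ ($v\in L$) or $v\to t$ ($v\in R$) guaranteed to lie in $E$.

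First I would handle the link $f\to h_0$. Writing $c_J=(-1)^{|J|}\prod_{i\in J}\theta(V_i)$ and $U_J=\cup_{i\in J}V_i$, and using $|I|=|I_{red}|+1$ so that $2^{1-|I_{red}|}-2^{1-|I|}=2^{1-|I|}$ together with $\theta(V_{j0})=\theta(\{v^*_j\})=1$ (as $v^*_j\notin L$), a direct computation gives
\[
h_0-f = 2^{1-|I|}\sum_{J\subseteq I_{red}}c_J\bigl(e_{U_J}+e_{U_J\cup\{v^*_j\}}\bigr).
\]
Every summand has the form $e_V+e_{V\cup\{v^*_j\}}$ with $v^*_j\notin V$ and $s,t\notin V\cup\{v^*_j\}$, so by claim~1 of Proposition~\ref{keyproposition} the difference $h_0-f$ vanishes on every cut not crossed by $s\to v^*_j$; since $s\to v^*_j\in E$, we can go from $f$ to $h_0$ with that edge.

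Next, the links $h_n\to h_{n+1}$ for $n\in[0,N-3]$. If $v_{n+1}\notin V_j$, or if $v_{n+1}=v^*_j$, then $V_{j,n+1}=V_{jn}$, hence $h_{n+1}=h_n$ and there is nothing to do. Otherwise $v_{n+1}\in V_j\setminus\{v^*_j\}$, so $v_{n+1}\in L$ or $v_{n+1}\in R$; writing $W=V_{jn}$ one computes
\[
h_{n+1}-h_n = 2^{1-|I|}\sum_{J\subseteq I_{red}}c_J\bigl(\theta(W\cup\{v_{n+1}\})\,e_{U_J\cup W\cup\{v_{n+1}\}}-\theta(W)\,e_{U_J\cup W}\bigr).
\]
If $v_{n+1}\in L$ then $\theta(W\cup\{v_{n+1}\})=-\theta(W)$, so each summand is a multiple of $e_V+e_{V\cup\{v_{n+1}\}}$ with $v_{n+1}\notin V$, and claim~1 of Proposition~\ref{keyproposition} shows this difference is zero off the cuts crossed by $s\to v_{n+1}\in E$. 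If $v_{n+1}\in R$ then $\theta(W\cup\{v_{n+1}\})=\theta(W)$, so each summand is a multiple of $e_V-e_{V\cup\{v_{n+1}\}}$, and claim~2 of Proposition~\ref{keyproposition} shows it is zero off the cuts crossed by $v_{n+1}\to t\in E$. Either way we can pass from $h_n$ to $h_{n+1}$ with an edge of $E$.

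Finally I would verify $h_{N-2}=g$. Since $v_1,\dots,v_{N-2}$ exhausts $V(G)\setminus\{s,t\}$, we get $V_{j,N-2}=\{v^*_j\}\cup V_j=V_j$ and $\theta(V_{j,N-2})=\theta(V_j)$, so $h_{N-2}=e_{\{\}}-2^{1-|I|}\sum_{J\subseteq I_{red}}c_J(e_{U_J}-\theta(V_j)e_{U_J\cup V_j})$; splitting the sum defining $g$ over $J\subseteq I$ according to whether $j\in J$ and recombining the two halves yields exactly this expression. Chaining the links then gives the desired route from $f$ to $g$ using only edges of $E$. The only real bookkeeping point — and the only thing that could go subtly wrong — is the behavior of the sign $\theta$ as $v_{n+1}$ enters $W$, since precisely that $L$-versus-$R$ dichotomy decides whether the increment is annihilated on cuts missed by $s\to v_{n+1}$ or on cuts missed by $v_{n+1}\to t$; beyond this I expect no obstacle, the three difference computations being routine.
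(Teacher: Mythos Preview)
Your proposal is correct and follows essentially the same approach as the paper's own proof: compute $h_0-f$ and match it to claim~1 of Proposition~\ref{keyproposition} via the edge $s\to v^*_j$, compute $h_{n+1}-h_n$ and use the $L$/$R$ membership of $v_{n+1}$ to select between claims~1 and~2, and observe $h_{N-2}=g$. You have in fact supplied more detail than the paper (the justification $\theta(V_{j0})=1$, the explicit check that $v_{n+1}\notin U_J\cup W$ so the proposition applies, and the split of the sum defining $g$ to identify it with $h_{N-2}$), all of which is correct.
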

\begin{proof}
$h_0 - f = 2^{(1 - |I|)}\sum_{J \subseteq I_{red}}{(-1)^{|J|}{(\prod_{i \in J}{\theta(V_i)})}
(e_{(\cup_{i \in J}{V_i})} + e_{((\cup_{i \in J}{V_i}) \cup \{v^{*}_j\})})}$, so by Proposition 
\ref{keyproposition} we can get from $f$ to $h_0$ with the edge $s \to v^{*}_j$.\\
If $n \in [0,N-3]$, then \\
$h_{n+1} - h_n = 2^{(1 - |I|)}\sum_{J \subseteq I_{red}}{(-1)^{|J|}{(\prod_{i \in J}{\theta(V_i)})}
(\theta(V_{j{(n+1)}})e_{((\cup_{i \in J}{V_i}) \cup V_{j{(n+1)}})} - \theta(V_{jn})e_{((\cup_{i \in J}{V_i}) \cup V_{jn})})}$\\
Using Proposition \ref{keyproposition}, it is easily verified that we can get from $h_n$ to $h_{n+1}$ with an edge $e \in E$ where 
$e = s \to v_{n+1}$ if $v_{n+1} \in L \cap V_j$, $e = v_{n+1} \to t$ if $v_{n+1} \in R \cap V_j$, and $e$ is arbitrary 
otherwise (as in this case $h_{n+1} = h_n$).\\
Now note that $g = h_{N-2}$, so we can reach $g$ from $f$ using the edges of $E$, as needed.
\end{proof}
\begin{lemma}\label{endlemma}
Let $l$ be an element of $I$ and let $I_{red} = I \backslash \{l\}$. If $E$ contains the edge $v^{*}_l \to t$ and $H$ contains the function 
$g = e_{\{\}} - 2^{(1 - |I|)}\sum_{J \subseteq I}{(-1)^{|J|}{(\prod_{i \in J}{\theta(V_i)})}e_{(\cup_{i \in J}{V_i})}}$ and
all functions \\
$h_n = e_{\{\}} - 2^{(1 - |I|)}\sum_{J \subseteq I_{red}}{(-1)^{|J|}{(\prod_{i \in J}{\theta(V_i)})}
(e_{(\cup_{i \in J}{V_i})} - \theta(V_{ln})e_{((\cup_{i \in J}{V_i}) \cup V_{ln})})}, n \in [0,N-2]$, \\
then we can reach $e_{\{\}}$ from $g$ using the edges of $E$.
\end{lemma}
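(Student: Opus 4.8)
\textbf{Proof proposal for Lemma \ref{endlemma}.}

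The plan is to mirror the proof of Lemma \ref{startlemma}, but run the chain of functions $h_n$ in the reverse direction, so that instead of building up the block $V_l$ we peel it off one vertex at a time and finish by using the edge $v^{*}_l \to t$ to eliminate the distinguished vertex $v^{*}_l$ entirely. First I would observe that $g = h_{N-2}$, exactly as at the end of the proof of Lemma \ref{startlemma}: when $n = N-2$ we have $V_{l(N-2)} = V_l$, and since $\theta$ is multiplicative over disjoint unions and $\theta(\{v^{*}_l\}) = 1$ (because $v^{*}_l \notin L$ by the setup), the summand $e_{(\cup_{i \in J} V_i)} - \theta(V_{l(N-2)}) e_{((\cup_{i \in J} V_i) \cup V_l)}$ recombines, after reindexing $J \mapsto J \cup \{l\}$, into the alternating sum defining $g$. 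So the sequence $h_{N-2}, h_{N-3}, \dots, h_0$ already starts at $g$.

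Next I would verify the single-step moves $h_{n+1} \to h_n$ for $n \in [0, N-3]$. These are literally the same differences as in Lemma \ref{startlemma} up to sign, namely
\[
h_{n+1} - h_n = 2^{(1 - |I|)}\sum_{J \subseteq I_{red}}{(-1)^{|J|}\Bigl(\prod_{i \in J}\theta(V_i)\Bigr)\bigl(\theta(V_{l(n+1)})e_{((\cup_{i \in J} V_i) \cup V_{l(n+1)})} - \theta(V_{ln})e_{((\cup_{i \in J} V_i) \cup V_{ln})}\bigr)},
\]
and since $V_{l(n+1)}$ and $V_{ln}$ differ by at most the single vertex $v_{n+1}$, Proposition \ref{keyproposition} (claims 1 and 2) shows this difference is supported only on cuts crossed by $s \to v_{n+1}$ when $v_{n+1} \in L \cap V_l$, only on cuts crossed by $v_{n+1} \to t$ when $v_{n+1} \in R \cap V_l$, and is identically zero otherwise (so any edge works, and in fact $h_{n+1} = h_n$). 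All the needed edges $s \to v, v \in L$ and $v \to t, v \in R$ lie in $E$ by the standing setup, so each move is legal. Traversing from $h_{N-2} = g$ down to $h_0$ therefore reaches $h_0$ from $g$ using only edges of $E$.

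Finally I would handle the terminal move $h_0 \to e_{\{\}}$ using the edge $v^{*}_l \to t$. Here $V_{l0} = \{v^{*}_l\}$ and $\theta(V_{l0}) = 1$, so
\[
h_0 - e_{\{\}} = -2^{(1-|I|)}\sum_{J \subseteq I_{red}}{(-1)^{|J|}\Bigl(\prod_{i \in J}\theta(V_i)\Bigr)\bigl(e_{(\cup_{i \in J} V_i)} - e_{((\cup_{i \in J} V_i) \cup \{v^{*}_l\})}\bigr)},
\]
which by claim 2 of Proposition \ref{keyproposition} vanishes on every cut not crossed by $v^{*}_l \to t$; hence we can go from $h_0$ to $e_{\{\}}$ with that edge, which is in $E$ by hypothesis. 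Chaining the three pieces gives a walk $g = h_{N-2} \to \cdots \to h_0 \to e_{\{\}}$ through functions all lying in $H$, so we can reach $e_{\{\}}$ from $g$ using the edges of $E$, as required. The only place demanding genuine care—rather than bookkeeping—is checking that the coefficient arithmetic in the identifications $g = h_{N-2}$ and the $h_0 \to e_{\{\}}$ step really does produce the claimed closed forms; this is the same $\theta$-multiplicativity computation as in Lemma \ref{startlemma} and presents no real obstacle, just a sign to track.
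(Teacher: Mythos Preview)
Your proposal is correct and follows essentially the same route as the paper's own proof: identify $g = h_{N-2}$, observe that the intermediate differences $h_{n+1}-h_n$ are literally those of Lemma \ref{startlemma} (the paper just says ``the functions $h_n$ are the same as before''), and finish with the edge $v^{*}_l \to t$ via claim 2 of Proposition \ref{keyproposition}. The paper presents these three pieces in a different order and with less detail, but the argument is the same.
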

\begin{proof}
$e_{\{\}} - h_0 = 2^{(1 - |I|)}\sum_{J \subseteq I_{red}}{(-1)^{|J|}{(\prod_{i \in J}{\theta(V_i)})}
(e_{(\cup_{i \in J}{V_i})} - e_{((\cup_{i \in J}{V_i}) \cup \{v^{*}_l\})})}$, so by Proposition 
\ref{keyproposition} we can get from $h_0$ to $e_{\{\}}$ with the edge $v^{*}_l \to t$.\\
The functions $h_n$ are the same as before, so for all $n \in [0,N-3]$ we can get from $h_{n+1}$ to $h_{n}$ with an edge in $E$. 
Now note that $g = h_{N-2}$, so we can reach $e_{\{\}}$ from $g$ using the edges of $E$, as needed.
\end{proof}
\begin{lemma}\label{progresslemma}
Let $j$ be an element of $I$ and take $I_{red} = I \backslash \{j\}$. If $E$ contains the edge $v^{*}_l \to v^{*}_j$ for some $l \in I_{red}$ 
and $H$ contains the function 
$f = e_{\{\}} - 2^{(1 - |I_{red}|)}\sum_{J \subseteq I_{red}}{(-1)^{|J|}{(\prod_{i \in J}{\theta(V_i)})}e_{(\cup_{i \in J}{V_i})}}$, the function 
$g = e_{\{\}} - 2^{(1 - |I|)}\sum_{J \subseteq I}{(-1)^{|J|}{(\prod_{i \in J}{\theta(V_i)})}e_{(\cup_{i \in J}{V_i})}}$, 
all functions \\
$a_n = f - 2^{(1 - |I|)}\sum_{J \subseteq I_{red} \backslash \{l\}}{(-1)^{|J|}{(\prod_{i \in J}{\theta(V_i)})}(b_n)}$, $n \in [0,N-2]$, where \\
$b_n = (e_{(\cup_{i \in J}{V_i})} - \theta(V_{ln})e_{((\cup_{i \in J}{V_i}) \cup V_{ln})} + 
e_{((\cup_{i \in J}{V_i}) \cup \{v^{*}_j\})} - \theta(V_{ln})e_{((\cup_{i \in J}{V_i}) \cup V_{ln} \cup \{v^{*}_j\})})$,\\
and all functions \\
$h_n = e_{\{\}} - 2^{(1 - |I|)}\sum_{J \subseteq I_{red}}{(-1)^{|J|}{(\prod_{i \in J}{\theta(V_i)})}
(e_{(\cup_{i \in J}{V_i})} - \theta(V_{jn})e_{((\cup_{i \in J}{V_i}) \cup V_{jn})})}, n \in [0,N-2]$, \\
then we can reach $g$ from $f$ using the edges of $E$.
\end{lemma}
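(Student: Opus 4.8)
The plan is to follow the same strategy as in Lemmas \ref{startlemma} and \ref{endlemma}, namely to exhibit an explicit sequence of functions in $H$ connecting $f$ to $g$, where consecutive functions differ by something that vanishes on all cuts not crossed by a single available edge, so that Proposition \ref{keyproposition} lets us take that edge. The difference from the earlier lemmas is that here the new ``token'' $v^{*}_j$ is introduced not via an edge $s \to v^{*}_j$ or $v^{*}_j \to t$ but via the edge $v^{*}_l \to v^{*}_j$ with $l \in I_{red}$, so the bridging computation has two phases: first a phase that ``copies'' the index-$l$ coordinate into a new index-$j$ coordinate using $v^{*}_l \to v^{*}_j$ together with the edges $s\to v,\ v\in L$ and $v\to t,\ v\in R$ inside $V_j$, and then a cleanup phase that removes the extra $v^{*}_l$-dependence. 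The functions $a_n$ and $h_n$ in the statement are precisely the intermediate stages of these two phases, so the work is just to verify the telescoping.

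The key steps, in order, are as follows. First I would check the initial step $a_0 - f$: here $a_0$ involves the block $b_0 = e_{(\cup_{i\in J}V_i)} - \theta(V_{l0})e_{(\cup_{i\in J}V_i)\cup V_{l0}} + e_{(\cup_{i\in J}V_i)\cup\{v^{*}_j\}} - \theta(V_{l0})e_{(\cup_{i\in J}V_i)\cup V_{l0}\cup\{v^{*}_j\}}$ with $V_{l0}=\{v^{*}_l\}$, and by claim 3 of Proposition \ref{keyproposition} (applied with the two vertices $v^{*}_l$ and $v^{*}_j$, after one verifies the grouped sum has exactly the $e_V - e_{V\cup\{v^{*}_l\}} + e_{V\cup\{v^{*}_j\}} - e_{V\cup\{v^{*}_l,v^{*}_j\}}$ pattern) we can get from $f$ to $a_0$ with the edge $v^{*}_l \to v^{*}_j$. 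Second, I would check the steps $a_{n+1} - a_n$ for $n\in[0,N-3]$: the only change is $V_{ln}\to V_{l(n+1)}$, which either does nothing (if $v_{n+1}\notin V_l$) or replaces $\theta(V_{ln})e_{\cdots\cup V_{ln}}$ and $\theta(V_{ln})e_{\cdots\cup V_{ln}\cup\{v^{*}_j\}}$ by their $V_{l(n+1)}$ analogues, which by Proposition \ref{keyproposition} is achievable with $s\to v_{n+1}$ if $v_{n+1}\in L\cap V_l$ and with $v_{n+1}\to t$ if $v_{n+1}\in R\cap V_l$. Third, I would observe that $a_{N-2}$, after resumming over $J\subseteq I_{red}$ (absorbing the $l$-coordinate back, since now both $e_{\cdots}$ and $e_{\cdots\cup V_l}$ appear with the right signs and $\theta(V_l)=\theta(V_{l(N-2)})$), equals $h_{N-2}$ — this is the algebraic identity that glues the two phases. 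Fourth, I would run the $h_n$ chain backwards from $n=N-2$ down to $n=0$: these are exactly the functions from Lemma \ref{startlemma}, so for each $n$ we can get from $h_{n+1}$ to $h_n$ with the appropriate edge among $s\to v_{n+1}$, $v_{n+1}\to t$ (depending on whether $v_{n+1}\in L\cap V_j$ or $R\cap V_j$), or a trivial step. Finally, $h_0 = g - 2^{(1-|I|)}\sum_J (-1)^{|J|}(\prod\theta)(e_{\cdots} + e_{\cdots\cup\{v^{*}_j\}}) + (\text{terms absorbing into }g)$; more carefully, $g - h_0$ has the form of claim 1 of Proposition \ref{keyproposition} for the edge $s\to v^{*}_j$ — but wait, $s\to v^{*}_j$ need not be in $E$. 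The correct last step is that $h_0$ already equals $g$: with $n=0$ we have $V_{j0}=\{v^{*}_j\}$, and $h_0 = e_{\{\}} - 2^{(1-|I|)}\sum_{J\subseteq I_{red}}(-1)^{|J|}(\prod_{i\in J}\theta(V_i))(e_{\cup_{i\in J}V_i} - \theta(\{v^{*}_j\})e_{(\cup_{i\in J}V_i)\cup\{v^{*}_j\}})$, which after resumming over $J\subseteq I$ (splitting each $J'\subseteq I$ as $J$ or $J\cup\{j\}$) is exactly $g$ provided $\theta(V_{j0})=\theta(V_j\cap L)$ restricted appropriately — here one uses that $v^{*}_j$ was removed from the $(L,R)$ partition, so $\theta(\{v^{*}_j\})=1$ and $V_{j(N-2)}=V_j$, making the telescoping close up.

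The main obstacle I expect is bookkeeping rather than conceptual: getting the sign conventions for $\theta$ and the factors $2^{(1-|I|)}$ versus $2^{(1-|I_{red}|)}$ to match up across the two resummations (the ``absorb the $l$-coordinate'' identity $a_{N-2}=h_{N-2}$ and the ``absorb the $j$-coordinate'' identity $h_0=g$), and confirming in each telescoping step that the grouped difference genuinely has one of the three canonical forms of Proposition \ref{keyproposition} for an edge that actually lies in $E$ (in particular that the only genuinely new edge used is $v^{*}_l\to v^{*}_j$, and every other step uses an edge of the form $s\to v,\ v\in L$ or $v\to t,\ v\in R$, or is a no-op). Once those identities are pinned down the argument is a routine verification, exactly parallel to Lemmas \ref{startlemma} and \ref{endlemma}.
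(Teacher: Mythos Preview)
Your overall plan matches the paper's: first use the edge $v^{*}_l \to v^{*}_j$ to go from $f$ to $a_0$, then telescope through the $a_n$ (the changing block being $V_{ln}$), then telescope through the $h_n$ (the changing block being $V_{jn}$), and finally arrive at $g$. You correctly identify that the $h_n$ are exactly the intermediate functions from Lemma~\ref{startlemma}.

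Where your proposal goes wrong is the gluing identity between the two phases. You claim $a_{N-2}=h_{N-2}$ and $h_0=g$, and then run the $h_n$ chain backwards. This is reversed. At $n=N-2$ the $a$-chain has grown the $l$-block all the way to $V_{l(N-2)}=V_l$, but the $j$-block in $b_n$ is still only $\{v^{*}_j\}=V_{j0}$; resumming the $l$-coordinate back into a sum over $J\subseteq I_{red}$ therefore produces exactly $h_0$, not $h_{N-2}$. Conversely, it is $h_{N-2}$ (where $V_{j(N-2)}=V_j$) that resums over $J\subseteq I$ to give $g$; your attempted verification that $h_0=g$ is what leads you to the worrying observation that you would need the edge $s\to v^{*}_j$, which indeed is not assumed to be in $E$. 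So the correct chain is
\[
f \;\to\; a_0 \;\to\; \cdots \;\to\; a_{N-2}=h_0 \;\to\; h_1 \;\to\;\cdots\;\to\; h_{N-2}=g,
\]
with the $h_n$ phase run \emph{forward}, exactly as in Lemma~\ref{startlemma}. Once you swap those two identities, the rest of your verification (the $a_0-f$ step via Proposition~\ref{keyproposition} part~3, and the $a_{n+1}-a_n$ steps via edges $s\to v_{n+1}$ or $v_{n+1}\to t$ according to whether $v_{n+1}\in L\cap V_l$ or $R\cap V_l$) goes through just as you describe.
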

\begin{proof}
$a_0 - f = 2^{(1 - |I|)}\sum_{J \subseteq I_{red} \backslash \{l\}}{(-1)^{|J|}{(\prod_{i \in J}{\theta(V_i)})}(b_0)}$ where \\
$b_0 = (e_{(\cup_{i \in J}{V_i})} - e_{((\cup_{i \in J}{V_i}) \cup \{v^{*}_{l}\})} + 
e_{((\cup_{i \in J}{V_i}) \cup \{v^{*}_j\})} - e_{((\cup_{i \in J}{V_i}) \cup \{v^{*}_{l}\} \cup \{v^{*}_j\})})$, so by Proposition 
\ref{keyproposition} we can get from $f$ to $a_0$ with the edge $v^{*}_l \to v^{*}_j$.\\
If $n \in [0,N-3]$, then \\
$a_{n+1} - a_n = 2^{(1 - |I|)}\sum_{J \subseteq I_{red}}{(-1)^{|J|}{(\prod_{i \in J}{\theta(V_i)})}(b_{n} - b_{n+1})}$ where \\
$b_{n} - b_{n+1} = (\theta(V_{l{(n+1)}})e_{((\cup_{i \in J}{V_i}) \cup V_{l{(n+1)}})} - \theta(V_{ln})e_{((\cup_{i \in J}{V_i}) \cup V_{ln})} + 
\theta(V_{l{(n+1)}})e_{((\cup_{i \in J}{V_i}) \cup V_{l{(n+1)}} \cup \{v^{*}_j\})} - 
\theta(V_{ln})e_{((\cup_{i \in J}{V_i}) \cup V_{ln} \cup \{v^{*}_j\})})$\\
Using Proposition \ref{keyproposition}, it is easily verified that we can get from $a_n$ to $a_{n+1}$ with an edge $e \in E$ where 
$e = s \to v_{n+1}$ if $v_{n+1} \in L \cap V_j$, $e = v_{n+1} \to t$ if $v_{n+1} \in R \cap V_j$, and $e$ is arbitrary 
otherwise (as in this case $a_{n+1} = a_n$).\\
The functions $h_n$ are the same as before, so for all $n \in [0,N-3]$ we can get from $h_n$ to $h_{n+1}$ with an edge in $E$. 
Now note that $a_{N-2} = h_0$ and $g = h_{N-2}$, so we can reach $g$ from $f$ using the edges of $E$, as needed.
\end{proof}
\subsection{The construction}\label{theconstruction}
We are now ready to construct our monotone switching network and prove Theorem \ref{efficientnetwork}. We recall the statement of Theorem 
\ref{efficientnetwork} below.
\vskip.1in
\noindent
{\bf Theorem \ref{efficientnetwork}.}
{\it
If $I_0 = \cup_{i}{\{G_{0i}\}}$ is a set of input graphs with vertex set $V(G_0)$, all of which contain 
a path from $s$ to $t$, then given a set of vertices $V(G)$ containing $s,t$, letting 
$k = |V(G_0) - 2|$, $N = |V(G)|$, and $m = sc(I_0)$, if we let $I$ be the set of all inputs of the form 
$G(G_{0i},V_0,L,R,\phi)$, then\\
$M(I) \leq 2^{(5m+3)}k^{(3m+2)}{N^3}{\lg{N}}$
}
\begin{proof}
Let $v_1, \cdots, v_{N-2}$ be the vertices of $V(G) \backslash \{s,t\}$ and let 
$w_1, \cdots, w_k$ be the vertices of $V(G_0) \backslash \{s,t\}$.\\
We will use Lemmas \ref{startlemma}, \ref{endlemma}, and \ref{progresslemma} as follows:\\
Let $Q = \{Q_r\}$ be a set of partitions of $V(G) \backslash \{s,t\}$ into $k$ parts, i.e. each 
$Q_{r}$ is of the form $(V_1,\cdots,V_k)$ where all of the $V_i$ are disjoint and $\cup_{i=1}^{k}{V_i} = V(G) \backslash \{s,t\}$.\\ 
Assume that we have an input graph of the form $G = G(G_{0i},V_0,L,R,\phi)$. Recall that $(V_0,L,R,\{s,t\})$ is a partition of $V(G)$, 
$|V_0| = k$, and $\phi: V_0 \to V(G_0)$ is a one-to-one map. Let $v^{*}_i = \phi^{-1}(w_i)$.\\ 
For a non-empty subset $I$ of $[1,k]$ of size at most $m$, if there is an $r$ such that writing $Q_r = (V_1,\cdots,V_k)$ we have that 
$\forall i \in I, V_i \cap V_0 = \{v^{*}_i\}$, then this gives the setup required to use Lemmas \ref{startlemma}, \ref{endlemma}, and 
\ref{progresslemma}.\\ 
Think of the function
$f = e_{\{\}} - 2^{(1 - |I_{red}|)}\sum_{J \subseteq I_{red}}{(-1)^{|J|}{(\prod_{i \in J}{\theta(V_i)})}e_{(\cup_{i \in J}{V_i})}}$
as though it were the knowledge set $K_{(\cup_{i \in I_{red}}{w_i})}$ and think of the function 
$g = e_{\{\}} - 2^{(1 - |I|)}\sum_{J \subseteq I}{(-1)^{|J|}{(\prod_{i \in J}{\theta(V_i)})}e_{(\cup_{i \in J}{V_i})}}$ as though it was 
the knowledge set $K_{(\cup_{i \in I}{w_i})}$. Lemmas \ref{startlemma}, \ref{endlemma}, and \ref{progresslemma} show us how to mimic 
the certain-knowledge swiching network $G'(V(G_0),m)$ defined in Definition \ref{basiccertainknowledge} using only the edges in $E(G)$.\\
However, as we go from one $I \subseteq [1,k]$ to another, we may need to switch from one partition $Q_r$ to another. We show that this can be 
done with the following lemma:
\begin{lemma}\label{switchinggears}
Let $(U_1,\cdots,U_k)$ and $(V_1,\cdots,V_k)$ be two partitions of $V(G) \backslash \{s,t\}$ and define \\
$W_{in} = (V_i \cap (\cup_{l \leq n}{\{v_l\}})) \cup (U_i \cap (\cup_{l > n}{\{v_l\}}))$.\\ 
If we have a subset $I$ such that $\forall i \in I, V_0 \cap U_i = V_0 \cap V_i = \{v^{*}_{i}\}$, then if $H$ contains the function \\
$g_1 = e_{\{\}} - 2^{(1 - |I|)}\sum_{J \subseteq I}{(-1)^{|J|}{(\prod_{i \in J}{\theta(U_i)})}e_{(\cup_{i \in J}{U_i})}}$, the function \\
$g_2 = e_{\{\}} - 2^{(1 - |I|)}\sum_{J \subseteq I}{(-1)^{|J|}{(\prod_{i \in J}{\theta(V_i)})}e_{(\cup_{i \in J}{V_i})}}$, and all 
functions of the form \\
$h_n = e_{\{\}} - 2^{(1 - |I|)}\sum_{J \subseteq I}{(-1)^{|J|}{(\prod_{i \in J}{\theta(W_{in})})}e_{(\cup_{i \in J}{W_{in}})}}$, $n \in [0,N-3]$, 
then we can reach $g_2$ from $g_1$ using only the edges of $E(G)$.
\end{lemma}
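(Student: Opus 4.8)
The plan is to prove Lemma \ref{switchinggears} by exactly the telescoping argument used for Lemmas \ref{startlemma}, \ref{endlemma}, and \ref{progresslemma}. First I would record that the endpoints are already in the list: since $W_{i0} = U_i\cap(\cup_{l\ge 1}\{v_l\}) = U_i$ we have $h_0 = g_1$, and since $W_{i(N-2)} = V_i$ the formula for $h_n$ at $n = N-2$ evaluates to $g_2$, so it suffices to show that for every $n\in[0,N-3]$ one can get from $h_n$ to $h_{n+1}$ (with $h_{N-2}$ read as $g_2$) using a single edge of $E(G)$; composing these transitions yields the desired path from $g_1$ to $g_2$.

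Fix $n$ and let $v_{n+1}$ lie in part $U_a$ of the first partition and in part $V_b$ of the second. For every $i\notin\{a,b\}$ we have $v_{n+1}\notin U_i\cup V_i$, hence $W_{in} = W_{i(n+1)}$; and when $a\ne b$, $W_{a(n+1)} = W_{an}\setminus\{v_{n+1}\}$ and $W_{b(n+1)} = W_{bn}\cup\{v_{n+1}\}$. Consequently, if $a = b$ or if $\{a,b\}\cap I = \emptyset$ then $h_{n+1} = h_n$ and any edge works, so we may assume $a\ne b$ and $\{a,b\}\cap I\ne\emptyset$. The key small observation is that in this case $v_{n+1}\notin V_0$: if, say, $a\in I$ and $v_{n+1}\in V_0$, then $v_{n+1}\in V_0\cap U_a = \{v^{*}_{a}\}$, so $v_{n+1} = v^{*}_{a}$, but $v^{*}_{a}\in V_0\cap V_a = \{v^{*}_{a}\}$ forces $v^{*}_{a}\in V_a$, contradicting $v_{n+1}\in V_b$ with $a\ne b$ and the $V_i$ disjoint; the case $b\in I$ is symmetric. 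Hence $v_{n+1}\in L\cup R$, so $E(G)$ contains $s\to v_{n+1}$ (if $v_{n+1}\in L$) or $v_{n+1}\to t$ (if $v_{n+1}\in R$), and this is the edge I will use.

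It then remains to compute $h_{n+1} - h_n$ and verify it has the form given by Proposition \ref{keyproposition} for that edge. Only subsets $J\subseteq I$ with $a\in J$ or $b\in J$ contribute, and I would split these into the cases $a\in J,\,b\notin J$; $a\notin J,\,b\in J$; and $a,b\in J$. In the last case $v_{n+1}$ leaves $W_{an}$ but simultaneously enters $W_{bn}$, so $\cup_{i\in J}W_{i(n+1)} = \cup_{i\in J}W_{in}$ while the two affected $\theta$-factors flip together, so that term cancels. In the first two cases $\cup_{i\in J}W_{i(n+1)}$ differs from $\cup_{i\in J}W_{in}$ precisely by adding or deleting $v_{n+1}$, and the relevant $\theta$-factor changes by $(-1)^{[v_{n+1}\in L]}$; grouping the $J$'s according to $J\setminus\{a,b\}$ then writes $h_{n+1}-h_n$ as $\sum_{V}c_V(e_V + e_{V\cup\{v_{n+1}\}})$ when $v_{n+1}\in L$ and as $\sum_{V}c_V(e_V - e_{V\cup\{v_{n+1}\}})$ when $v_{n+1}\in R$, with $v_{n+1}\notin V$ in each summand. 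These are exactly the forms in claims 1 and 2 of Proposition \ref{keyproposition}, so we can get from $h_n$ to $h_{n+1}$ with the chosen edge, completing the step and hence the lemma.

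The main obstacle I expect is purely the bookkeeping in this last computation: correctly pairing the Fourier coefficients across the three types of $J$, getting the cancellation in the $a,b\in J$ case, and tracking the $\theta$-signs; the one genuinely structural point, short but essential, is the argument that $v_{n+1}$ cannot be one of the $k$ vertices of $V_0$ whenever the step is nontrivial, which is what guarantees the edge we need actually appears in $E(G)$.
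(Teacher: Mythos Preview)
Your proposal is correct and follows essentially the same telescoping argument as the paper: identify $h_0=g_1$, $h_{N-2}=g_2$, and show each consecutive step $h_n\to h_{n+1}$ can be made with an edge of $E(G)$ via Proposition~\ref{keyproposition}. Your write-up is in fact more complete than the paper's, which simply asserts ``if $W_{J,(n+1)}\neq W_{J,n}$ for some $J\subseteq I$ then $v_{n+1}\in L\cup R$'' and then says the Proposition~\ref{keyproposition} check ``is easily verified''; you supply the missing structural argument (using the hypothesis $V_0\cap U_i=V_0\cap V_i=\{v^{*}_i\}$ to rule out $v_{n+1}\in V_0$) and sketch the Fourier bookkeeping explicitly. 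One tiny imprecision: in the case $a,b\in J$ the two $\theta$-factors ``flip together'' only when $v_{n+1}\in L$; when $v_{n+1}\in R$ neither flips, but either way the product is unchanged, so your cancellation conclusion stands.
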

\begin{proof}
For $n \in [0,N-3]$, \\
$h_{n+1} - h_n = 2^{(1 - |I|)}\sum_{J \subseteq I}{(-1)^{|J|}
({(\prod_{i \in J}{\theta(W_{in})})}e_{(\cup_{i \in J}{W_{in}})} - {(\prod_{i \in J}{\theta(W_{i{(n+1)}})})}e_{(\cup_{i \in J}{W_{i{(n+1)}}})})}$\\
Letting $W_{J,n} = \cup_{i \in J}{W_{in}}$, 
$h_{n+1} - h_n = 2^{(1 - |I|)}\sum_{J \subseteq I}{(-1)^{|J|}(\theta(W_{J,n})e_{W_{J,n}} - \theta(W_{J,{(n+1)}})e_{W_{J,{(n+1)}}})}$\\
If $W_{J,{(n+1)}} = W_{J,n}$ for all $J \subseteq I$ then $h_{n+1} = h_n$ so we can trivially get from $h_n$ to $h_{n+1}$. If 
$W_{J,{(n+1)}} \neq W_{J,n}$ for some $J \subseteq I$ then $v_{n+1} \in L \cup R$ and either 
$W_{J,{(n+1)}} = W_{J,n} \cup \{v_{n+1}\}$ or $W_{J,n} = W_{J,{(n+1)}} \cup \{v_{n+1}\}$. Using Proposition \ref{keyproposition}, 
it is easily verified that we can get from $h_n$ to $h_{n+1}$ with the edge $s \to v_{n+1}$ if $v_{n+1} \in L$ and we can get from 
$h_n$ to $h_{n+1}$ with the edge $v_{n+1} \to t$ if $v_{n+1} \in R$.\\
$g_1 = h_0$ and $g_2 = h_{N-2}$, so it follows that we can reach $g_2$ from $g_1$ using only the edges of $E(G)$, as needed.
\end{proof}
\noindent We will now construct our set $H$ of functions. Note that there are at most $k^m$ non-empty subsets $I$ of $[1,k]$ of size at most 
$m$ and recall that $Q = \{Q_r\}$ is a set of partitions of $V(G) \backslash \{s,t\}$ into $k$ parts. We will take all functions required by 
Lemmas \ref{startlemma}, \ref{endlemma}, \ref{progresslemma}, and \ref{switchinggears} for all possible input graphs of the form $G(G_{0i},V_0,L,R,\phi)$.\\
We begin by taking all functions of the form 
$e_{\{\}} - 2^{(1 - |I|)}\sum_{J \subseteq I}{(-1)^{|J|}{(\prod_{i \in J}{\theta(V_i)})}e_{(\cup_{i \in J}{V_i})}}$\\
These functions depend only on the partition $Q_r = (V_1,\cdots,V_k)$, the subset $I \subseteq [1,k]$, and the values $\theta(V_i), i \in I$. 
Thus, we have at most $|Q|(k^m)(2^m)$ such functions.\\
We then take all functions of the form \\
$h_n = e_{\{\}} - 2^{(1 - |I|)}\sum_{J \subseteq I_{red}}{(-1)^{|J|}{(\prod_{i \in J}{\theta(V_i)})}
(e_{(\cup_{i \in J}{V_i})} - \theta(V_{jn})e_{((\cup_{i \in J}{V_i}) \cup V_{jn})})}, n \in [0,N-2]$ \\
as required by Lemma \ref{startlemma}, Lemma \ref{endlemma}, and Lemma \ref{progresslemma}. 
Recall that $V_{jn} = \{v^{*}_j\} \cup (V_j \cap (\cup_{l \leq n}{\{v_l\}}))$, so these functions 
depend only on the partition $Q_r = (V_1,\cdots,V_k)$, the subset $I \subseteq [1,k]$, the values $\theta(V_i), i \in I_{red}$, the vertex 
$v^{*}_{j} \in V_j$, the value $\theta(V_{jn})$, and the value of $n$. Thus, we have at most $|Q|(k^m)(2^m)(N^2)$ such functions.\\
Similarly, we take all functions of the form \\
$a_n = f - 2^{(1 - |I|)}\sum_{J \subseteq I_{red} \backslash \{l\}}{(-1)^{|J|}{(\prod_{i \in J}{\theta(V_i)})}(b_n)}$, $n \in [0,N-2]$, where \\
$f = e_{\{\}} - 2^{(1 - |I_{red}|)}\sum_{J \subseteq I_{red}}{(-1)^{|J|}{(\prod_{i \in J}{\theta(V_i)})}e_{(\cup_{i \in J}{V_i})}}$ and \\
$b_n = (e_{(\cup_{i \in J}{V_i})} - \theta(V_{ln})e_{((\cup_{i \in J}{V_i}) \cup V_{ln})} + 
e_{((\cup_{i \in J}{V_i}) \cup \{v^{*}_j\})} - \theta(V_{ln})e_{((\cup_{i \in J}{V_i}) \cup V_{ln} \cup \{v^{*}_j\})})$\\
as required by Lemma \ref{progresslemma}. These functions 
depend only on the partition $Q_r = (V_1,\cdots,V_k)$, the subset $I \subseteq [1,k]$, the values $\theta(V_i), i \in I_{red}$, the vertices  
$v^{*}_l \in V_l, v^{*}_{j} \in V_j$, the value $\theta(V_{ln})$, and the value of $n$. Thus, we have at most $|Q|(k^m)(2^m)(N^3)$ such functions.\\
Finally, we take all functions of the form 
$e_{\{\}} - 2^{(1 - |I|)}\sum_{J \subseteq I}{(-1)^{|J|}{(\prod_{i \in J}{\theta(W_{in})})}e_{(\cup_{i \in J}{W_{in}})}}$ \\
as required by Lemma \ref{switchinggears}. These functions depend only on the partitions $Q_{r_1} = (U_1,\cdots,U_k)$, $Q_{r_2} = (V_1,\cdots,V_k)$, 
the value of $n$, and the values $\theta(W_{in}), i \in I$. Thus, we have at most $|Q|^2(k^m)(2^m)N$ such functions.\\
In total, we have at most $|Q|^2(k^m)(2^m)N + 2|Q|(k^m)(2^m)(N^3)$ functions.
\begin{lemma}
If for every choice of the vertices $v^{*}_1, \cdots, v^{*}_k \in V(G) \backslash \{s,t\}$ and for all subsets 
$I \subseteq [1,k]$ of size at most $m$ there is an $r$ such that writing $Q_r = (V_1,\cdots,V_k)$ we have that 
$\forall i \in I, V_i \cap V_0 = \{v^{*}_i\}$, then for the set of functions $H$ described above, $G'(H)$ accepts all input graphs of 
the form $G = G(G_{0i},V_0,L,R,\phi)$.
\end{lemma}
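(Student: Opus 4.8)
The plan is: given an arbitrary input $G = G(G_{0i},V_0,L,R,\phi)$ in $I$, produce a path from $s'$ to $t'$ in $G'(H)$ all of whose labels lie in $E(G)$. Since $m = sc(I_0)$, the certain-knowledge switching network $G'(V(G_0),m)$ of Definition \ref{basiccertainknowledge} solves $I_0$, and $G_{0i}\in I_0$ contains an $s$-$t$ path, so $G'(V(G_0),m)$ accepts $G_{0i}$: there is a (simple) path in it from $s' = v'_{\{\}}$ to $t'$, say through vertices $v'_{W_0}=s', v'_{W_1},\dots,v'_{W_{p-1}}, v'_{W_p}=t'$ with $W_0 = \{\}$ and each $|W_j|\le m$, whose edges carry labels in $E(G_{0i})$. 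Using Proposition \ref{certainknowledgerules} together with the fact that every knowledge set $K_W$ (for $W\subseteq V(G_0)\setminus\{s,t\}$) consists only of edges $s\to w$, I would first check that each edge of this path is of one of three kinds: (i) adding or deleting a single vertex $w_a$ (so $W_{j+1} = W_j \triangle \{w_a\}$) carried by an edge $s\to w_a$, or by an edge $w_l\to w_a$ with $w_l$ in the smaller of $W_j,W_{j+1}$; (ii) moving to $t'$, carried by an edge $w_l\to t$ with $w_l\in W_{p-1}$ or by the edge $s\to t$; or (iii) a trivial step with $W_j = W_{j+1}$. Since $P'$ is simple, $W_j = \{\}$ occurs only for $j=0$.

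Set $v^{*}_i = \phi^{-1}(w_i)$, so $V_0 = \{v^{*}_1,\dots,v^{*}_k\}$, and for $W\subseteq V(G_0)\setminus\{s,t\}$ write $I_W = \{i : w_i\in W\}\subseteq[1,k]$, noting $|I_W|\le m$. To a knowledge set $K_W$ and a partition $Q_r = (V_1,\dots,V_k)$ of $V(G)\setminus\{s,t\}$ with $V_i\cap V_0 = \{v^{*}_i\}$ for all $i\in I_W$ — with distinguished vertices $v^{*}_i$ and with $(L,R)$ taken as the input's partition restricted to $(\cup_{i\in I_W}V_i)\setminus\{v^{*}_i : i\in I_W\}$ — associate the function
\[ g(W,Q_r) = e_{\{\}} - 2^{(1-|I_W|)}\sum_{J\subseteq I_W}(-1)^{|J|}\Big(\prod_{i\in J}\theta(V_i)\Big)e_{(\cup_{i\in J}V_i)}, \]
observing $g(\{\},Q_r) = -e_{\{\}} = h_{s'}$ for every $Q_r$ and regarding $e_{\{\}} = h_{t'}$ as the image of the knowledge set $\{s\to t\}$ of $t'$. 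Lemmas \ref{startlemma}, \ref{endlemma}, \ref{progresslemma} are exactly the assertions that each transition of kind (i) or (ii) is mimicked by a path in $G'(H)$ between the corresponding functions $g(W,Q_r)$, once a partition adapted to the relevant index set is fixed: a step labeled $s\to w_a$ uses Lemma \ref{startlemma} with the edge $s\to v^{*}_a$; a step labeled $w_l\to w_a$ uses Lemma \ref{progresslemma} with the edge $v^{*}_l\to v^{*}_a$; a step labeled $w_l\to t$ uses Lemma \ref{endlemma} with the edge $v^{*}_l\to t$; deletions are the reverses of these (reachability being symmetric); and a step labeled $s\to t$ is automatic since $s\to t$ crosses every cut, so $G'(H)$ has an $s\to t$-labeled edge between any two vertices. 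All edges named here lie in $E(G)$: $v^{*}_l\to v^{*}_a\in E(G)$ because $(w_l,w_a)\in E(G_{0i})$, and likewise for $s\to v^{*}_a$ and $v^{*}_l\to t$; and $E(G)$ contains every $s\to v,\ v\in L$ and $v\to t,\ v\in R$ as the set-up before Lemma \ref{startlemma} requires.

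It remains to splice these sub-paths together. For the $j$-th edge of $P'$ put $\hat I_j = I_{W_j}\cup I_{W_{j+1}}$, reading $I_{W_p}$ as $I_{W_{p-1}}$; in every case $|\hat I_j|\le m$. Applying the hypothesis of the lemma to the vertices $v^{*}_1,\dots,v^{*}_k$ and the set $\hat I_j$ produces a partition $Q_{\rho_j}\in Q$ with $(Q_{\rho_j})_i\cap V_0 = \{v^{*}_i\}$ for all $i\in\hat I_j$; run the $j$-th sub-path with this partition, from $g(W_j,Q_{\rho_j})$ to $g(W_{j+1},Q_{\rho_j})$ (or to $h_{t'}$ when $W_{j+1}$ is $t'$). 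Between consecutive sub-paths we sit at $g(W_j,Q_{\rho_{j-1}})$ and must pass to $g(W_j,Q_{\rho_j})$; since $I_{W_j}\subseteq\hat I_{j-1}$ and $I_{W_j}\subseteq\hat I_j$, both partitions separate $\{v^{*}_i : i\in I_{W_j}\}$ as Lemma \ref{switchinggears} requires, and that lemma supplies the connecting path (no adjustment is needed at the very start, as $g(\{\},Q) = h_{s'}$ for all $Q$). Concatenating, and discarding cycles, yields a path from $s'$ to $t'$ in $G'(H)$ using only edges with labels in $E(G)$; by the proposition identifying reachability in $H$ with paths in $G'(H)$, $G'(H)$ accepts $G$. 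What makes this go through is that every function invoked above — each $g(W,Q_r)$ and each intermediate function $h_n$, $a_n$ of Lemmas \ref{startlemma}--\ref{switchinggears} — was placed into $H$ by the enumeration preceding this lemma, which in particular includes one copy for each admissible assignment of the signs $\theta(V_i)$, so that the $L$ and $R$ of the particular input are accounted for.

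I expect the genuine difficulty to be this last, globalized bookkeeping: guaranteeing that the endpoint of each sub-path coincides, as a function, with the starting point of the next — which forces Lemma \ref{switchinggears} in at every junction where consecutive steps of $P'$ want different partitions — and then checking that the partitions the hypothesis supplies are simultaneously adapted to the two index sets meeting at that junction. Everything else is the routine, if lengthy, verification via Proposition \ref{keyproposition} that the specific edges listed above really do realize the claimed moves in $G'(H)$, together with the elementary observation that transitions of $G'(V(G_0),m)$ decompose into the moves (i)--(iii).
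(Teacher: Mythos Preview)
Your proof is correct and follows essentially the same approach as the paper: simulate an accepting path of $G'(V(G_0),m)$ step by step in $G'(H)$ via Lemmas \ref{startlemma}, \ref{endlemma}, \ref{progresslemma}, switching partitions with Lemma \ref{switchinggears} when the index set changes. Your bookkeeping is slightly more symmetric (fixing a partition $Q_{\rho_j}$ for each edge of $P'$ and splicing at every junction, rather than inductively carrying a partition and switching only before additions), and you are a bit more explicit about the reverse direction for deletions and the trivial $s\to t$ case, but the argument is the same.
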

\begin{proof}
Let $G = G(G_{0i},V_0,L,R,\phi)$ be an input graph and take $v^{*}_i = \phi^{-1}(w_i)$. Since $I_0 = \cup_{i}{\{G_{0i}\}}$ and $sc(I_0) = m$, 
the certain-knowledge switching network $G'(V(G_0),m)$ described in Definition \ref{basiccertainknowledge} accepts the input graph $G_{0i}$, i.e. 
there is a path $P'$ from $s'$ to $t'$ in $G'(V(G_0),m)$ using only the edges of $G_{0i}$. We will show that we can follow each 
step of the certain-knowledge switching network $G'(V(G_0),m)$ in $G'(H)$.\\
Assume that we are currently at the vertex with knowledge set $K_{\{w_i:i \in I\}}$ in $G'(V(G_0),m)$ and we are at a corresponding vertex in $G'(H)$
with function \\
$e_{\{\}} - 2^{(1 - |I|)}\sum_{J \subseteq I}{(-1)^{|J|}{(\prod_{i \in J}{\theta(V_i)})}e_{(\cup_{i \in J}{V_i})}}$ where 
$(V_1, \cdots, V_k) = Q_r$ for some $r$ and \\ 
$\forall i \in I, V_i \cap V_0 = \{v^{*}_i\}$.\\
The knowledge-set of the next vertex on $P'$ has one of the following three forms:\\
1. $K_{\{w_i:i \in I \cup \{j\}\}}$ where $j \notin I$, $|I \cup \{j\}| \leq m$, and either $E(G_{i0})$ contains the edge 
$w_l \to w_j$ for some $l \in I$ or $E(G_{i0})$ contains the edge $s \to w_j$, which implies that $E(G)$ either contains the edge 
$v^{*}_l \to v^{*}_j$ for some $l \in I$ or $E(G_{i0})$ contains the edge $s \to v^{*}_j$. In this case, by assumption 
there is an $r_2$ such that writing $Q_{r_2} = (U_1, \cdots, U_k)$ we have that $\forall i \in I \cup \{j\}, U_i \cap V_0 = \{v^{*}_i\}$. 
We can use Lemma \ref{switchinggears} to \\
reach $e_{\{\}} - 2^{(1 - |I|)}\sum_{J \subseteq I}{(-1)^{|J|}{(\prod_{i \in J}{\theta(U_i)})}e_{(\cup_{i \in J}{U_i})}}$ from 
$e_{\{\}} - 2^{(1 - |I|)}\sum_{J \subseteq I}{(-1)^{|J|}{(\prod_{i \in J}{\theta(V_i)})}e_{(\cup_{i \in J}{V_i})}}$ and then we can use either 
Lemma \ref{startlemma} or \ref{progresslemma} to reach 
$e_{\{\}} - 2^{(1 - |I|)}\sum_{J \subseteq I \cup \{j\}}{(-1)^{|J|}{(\prod_{i \in J}{\theta(U_i)})}e_{(\cup_{i \in J}{U_i})}}$ from 
$e_{\{\}} - 2^{(1 - |I|)}\sum_{J \subseteq I}{(-1)^{|J|}{(\prod_{i \in J}{\theta(U_i)})}e_{(\cup_{i \in J}{U_i})}}$\\
2. $K_{\{w_i:i \in I \backslash \{j\}\}}$ where $j \in I$, $|I| \leq m$, and either $E(G_{i0})$ contains the edge 
$w_l \to w_j$ for some $l \in I$ or $E(G_{i0})$ contains the edge $s \to w_j$, which implies that $E(G)$ either contains the edge 
$v^{*}_l \to v^{*}_j$ for some $l \in I$ or $E(G_{i0})$ contains the edge $s \to v^{*}_j$. In this case, we can use either 
Lemma \ref{startlemma} or \ref{progresslemma} to reach \\ 
$e_{\{\}} - 2^{(1 - |I|)}\sum_{J \subseteq I \backslash \{j\}}{(-1)^{|J|}{(\prod_{i \in J}{\theta(V_i)})}e_{(\cup_{i \in J}{V_i})}}$ from
$e_{\{\}} - 2^{(1 - |I|)}\sum_{J \subseteq I}{(-1)^{|J|}{(\prod_{i \in J}{\theta(V_i)})}e_{(\cup_{i \in J}{V_i})}}$\\
3. $K_{t'}$. Here $|I| \leq m$ and $E(G_{i0})$ contains the edge $w_l \to t$ for some $l \in I$, which implies that $E(G)$ contains the edge 
$v^{*}_l \to t$ for some $l \in I$. In this case, we can use Lemma 
\ref{endlemma} to reach $e_{\{\}}$ from 
$e_{\{\}} - 2^{(1 - |I|)}\sum_{J \subseteq I}{(-1)^{|J|}{(\prod_{i \in J}{\theta(V_i)})}e_{(\cup_{i \in J}{V_i})}}$.\\
Thus, we can follow each step of the certain-knowledge switching network $G'(V(G_0),m)$ in $G'(H)$ using only the edges of $E(G)$. In 
$G'(V(G_0),m)$, we started at $s'$ and ended at $t'$, so there must be a path from $s'$ to $t'$ in $G'(H)$ using only the edges in $E(G)$.
\end{proof}
\noindent Thus, we just need to ensure that for each non-empty subset $I$ of $[1,k]$ of size at most $m$, there is an $r$ such that writing 
$Q_r = (V_1,\cdots,V_k)$ we have that $\forall i \in I, V_i \cap V_0 = \{v^{*}_i\}$. If we are given $v^{*}_1, \cdots, v^{*}_k$ and 
a subset $I \subseteq [1,k]$ of size at most $m$, then if we pick assign each $v \in V(G) \backslash \{s,t\}$ to $V_i$ with probability 
$\frac{1}{k}$ and look at the resulting partition $(V_1,\cdots,V_k)$, the probability that $\forall i \in I, V_i \cap V_0 = \{v^{*}_i\}$ is \\
${k^{-|I|}}(\frac{k-|I|}{k})^{k-|I|} \geq {k^{-|I|}}(\frac{k-|I|}{k})^{k} \geq {k^{-m}}(\frac{k-m}{k})^{k} = 
{k^{-m}}((\frac{k-m}{k})^{\frac{k}{2m}})^{2m} \geq {k^{-m}}(\frac{1}{2})^{2m} = \frac{1}{(4k)^m}$\\
There are at most $N^k$ choices for $v^{*}_1, \cdots, v^{*}_k$ and $k^m$ choices for $I$, so by the probabilistic method we can choose at most 
$1 + \log_{(1-\frac{1}{(4k)^m})^{-1}}((N^k)(k^m)) \leq (4k)^m\lg{((N^k)(k^m))} \leq 2k(4k)^{m}\lg{N}$ distinct $Q_r$ and guarantee that 
for each non-empty subset $I$ of $[1,k]$ of size at most $m$, there is an $r$ such that writing 
$Q_r = (V_1,\cdots,V_k)$ we have that $\forall i \in I, V_i \cap V_0 = \{v^{*}_i\}$. Plugging $|Q| = 2k(4k)^{m}\lg{N}$ into our expression 
for the number of functions $H$ must contain, we find that \\
$|V(G'(H))| = |H| \leq |Q|^2(k^m)(2^m)N + 2|Q|(k^m)(2^m)(N^3) \leq \newline 
2^{(5m+2)}k^{(3m+2)}N(\lg{N})^2 + 2^{(3m+2)}k^{(2m+1)}{N^3}{\lg{N}} \leq 2^{(5m+3)}k^{(3m+2)}{N^3}{\lg{N}}$
\end{proof}
\begin{remark}
This construction can be improved to obtain an $O({N^2}\lg{N})$ upper bound. The best lower bound we have for monotone switching 
networks solving these inputs is $\Omega{(N^2)}$.
\end{remark}
\section{Lower bounds for certain knowledge switching networks}\label{certainknowledgefails}
\noindent In this section, we prove a lower bound on certain-knowledge switching networks and deduce that monotone switching networks 
for directed connectivity are strictly more powerful than certain-knowledge switching networks.
\vskip.1in
\noindent
{\bf Theorem \ref{certainknowledgelowerbound}.}
{\it
If $G_0 = P$ is a path of length $k+1$ from $s$ to $t$, $V(G)$ is a set of vertices of size $N$
containing $s,t$, and $N \geq 10k^2$, then letting $I$ be the set of inputs of the form $G(P,V_0,L,R,\phi)$ and letting \\
$m = 1 + \lfloor{\lg{k}}\rfloor$, $C(I) \geq \frac{1}{2}(\frac{N}{2k(k + \lg{(kN)})})^m$.
}
\begin{proof}
Consider which knowledge sets can be useful for accepting a particular input graph $G(P,V_0,L,R,\phi)$. 
We can ignore operation 3 of Proposition \ref{certainknowledgerules}, as if we are ever in a position to use that operation, we can go 
immediately to $t'$ instead. Given an input graph $G(P,V_0,L,R,\phi)$, if we only use operations 1 and 2 of Proposition 
\ref{certainknowledgerules} then we can only obtain edges of the form $v,w$, $v \in \{s\} \cup V_0$, $w \in \{t\} \cup V_0$, of the 
form $s \to v, v \in L$, or of the form $v \to t, v \in R$.\\
By Lemma 3.17 of Potechin \cite{potechin}, any path in a certain-knowledge switching 
network from $s'$ to $t'$ using only the edges of $G(P,V_0,L,R,\phi)$ must pass through at least one vertex $a'$ such that the union 
of the endpoints of the edges in $K_{a'}$ contains at least $m$ of the vertices in $V_0$.
\begin{definition}
Call a knowledge set $K$ useful for the input graph $G(P,V_0,L,R,\phi)$ if $K$ only contains edges of the form 
$v,w$, $v \in \{s\} \cup V_0$, $w \in \{t\} \cup V_0$, of the form $s \to v, v \in L$, or of the form $v \to t, v \in R$, 
the union of the endpoints of the edges in $K$ contains at least $m$ of the vertices in $V_0$, and $K \neq K_{t'} = \{s \to t\}$
\end{definition}
\begin{proposition}\label{certainknowledgerequirement}
If a certain-knowledge switching network $G'$ accepts all inputs of the form $G(P,V_0,L,R,\phi)$, then for each such 
input graph $G = G(P,V_0,L,R,\phi)$, $G'$ contains at least one vertex $v'$ whose knowledge set $K$ is useful for $G$.
\end{proposition}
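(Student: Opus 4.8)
The plan is to take an accepting path in $G'$ for the given input, read it as a sequence of knowledge-set manipulations via Proposition \ref{certainknowledgerules}, and locate on it the vertex promised by Lemma 3.17 of Potechin \cite{potechin}, checking that its knowledge set meets the three requirements of the definition of ``useful'': (i) only edges of the allowed forms, (ii) at least $m$ of the vertices of $V_0$ among the endpoints, and (iii) $K \neq K_{t'}$. So fix an input $G = G(P,V_0,L,R,\phi)$; we may assume $k \ge 1$, the case $k=0$ being degenerate. Since $G'$ is monotone and accepts $G$, there is a path $P' = (v'_0,\dots,v'_\ell)$ in $G'$ with $v'_0=s'$, $v'_\ell=t'$, whose $i$-th edge carries a label $e_i \in E(G)$, and by Proposition \ref{certainknowledgerules} the knowledge sets $K_{v'_0}=\{\},K_{v'_1},\dots,K_{v'_\ell}=\{s\to t\}$ are obtained from one another by operations 1--3 applied with the $e_i$. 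Let $p$ be the last index for which $\bar{K_{v'_p}}$ is not the complete graph; it exists since $\bar{K_{s'}}=\{\}$, and $p<\ell$.

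The first thing to establish is that $\bar{K_{v'_i}}$ is non-complete exactly for $i \le p$ and that on the steps $0,\dots,p$ only operations 1 and 2 are ever used. Since $k \ge 1$ no label $e_i$ equals $s\to t$, so operation 1 never affects $s\to t$, operation 3 cannot remove it, and operation 2 can only ``remove'' $s\to t$ from a closure that is already complete (leaving it complete); hence once $s\to t$ enters the transitive closure it stays, so the non-complete indices form an initial segment $\{0,\dots,p\}$, and on those steps operation 3 is unavailable. By the structural observation recorded just before the proposition, the edges obtainable from $\{\}$ using only operations 1 and 2 with edges of $G(P,V_0,L,R,\phi)$ are exactly those of the form $v\to w$ with $v\in\{s\}\cup V_0$, $w\in\{t\}\cup V_0$, or $s\to v$ with $v\in L$, or $v\to t$ with $v\in R$ (this also follows directly from the fact that every edge of $G$ has such a form while every vertex of $L$ is a sink and every vertex of $R$ a source in $G$, so a transitive edge of a subgraph of $G$ with no $s$-$t$ path is either one such edge or is carried by a path inside $\{s,t\}\cup V_0$). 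Thus every $K_{v'_i}$ with $i\le p$ satisfies (i); it also satisfies (iii), since $\bar{K_{v'_i}}$ is non-complete whereas $\bar{K_{t'}}$ is complete.

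It then remains to apply Lemma 3.17 of Potechin \cite{potechin} to $P'$. Since past the first vertex at which the knowledge set has complete transitive closure the run can be taken to pass straight to $t'$, the lemma is applied to the initial portion of $P'$ and produces a vertex $a'$ with non-complete closure at which the union of the endpoints of the edges of $K_{a'}$ contains at least $m$ vertices of $V_0$, i.e. (ii) holds at $a'$. Being non-complete, $\bar{K_{a'}}$ equals $\bar{K_{v'_i}}$ for some $i\le p$, so by the previous paragraph $K_{a'}$ also satisfies (i) and (iii). Hence $K_{a'}$ is useful for $G$; as $G$ was an arbitrary input of the form $G(P,V_0,L,R,\phi)$, this is exactly the claim.

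The substance is carried by the two ingredients already available: Lemma 3.17, which supplies requirement (ii), and the structural observation about the edge set of $G(P,V_0,L,R,\phi)$, which supplies requirement (i). The only point demanding care — and the place a careless proof could slip — is the bookkeeping in the second and third paragraphs: one must verify that before the knowledge set first becomes complete operation 3 plays no role, so that the structural observation really applies, and that the vertex delivered by Lemma 3.17 lies in that pre-completion part of the run, so that (iii) comes for free. Neither point is deep, but both are needed.
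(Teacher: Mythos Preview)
Your approach mirrors the paper's: the paper states this proposition without an explicit proof, relying on the preceding two paragraphs (the ``ignore operation 3'' remark plus Lemma~3.17 of \cite{potechin}), and you are essentially unpacking those two paragraphs.

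There is, however, a genuine slip in your second paragraph. The claim that ``once $s\to t$ enters the transitive closure it stays,'' and hence that the non-complete indices form the initial segment $\{0,\dots,p\}$, is false. Your three bullet points establish only that the \emph{literal} edge $s\to t$ is never deleted by operations 1--3 when $e_i\neq s\to t$; they say nothing about $s\to t$ as an element of the \emph{closure}. Concretely, if $K_{v'_i}=\{s\to a,\,a\to t\}$ (complete closure) and the edge label is $e_{i+1}=a\to t\in E(G)$, then $K_{v'_{i+1}}=\{s\to a\}$ (non-complete closure) satisfies both directions of the certain-knowledge edge condition. Worse, this undermines your derivation of (i): along a transition where $\overline{K_{v'_{i-1}}\cup\{e_i\}}$ happens to be complete but $\bar K_{v'_{i-1}}$ and $\bar K_{v'_i}$ are both non-complete, operation~3 is available at an intermediate step, so $\bar K_{v'_i}$ can perfectly well contain a disallowed edge (e.g.\ pass from $\{s\to a,\,a\to b\}$ to $\{s\to a,\,a\to b,\,c\to e\}$ with $c\in L$, $e\in R$, via the label $b\to t$). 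Thus ``every $K_{v'_i}$ with $i\le p$ satisfies (i)'' is not established, and the appeal to ``the previous paragraph'' in paragraph~3 does not go through.

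The paper's own argument is at the same informal level here (``if we are ever in a position to use [operation 3], we can go immediately to $t'$ instead''), so you are not taking a wrong route --- you have simply not supplied the bookkeeping you promised. A clean fix is to truncate the accepting path at the first index $q$ for which $\overline{\{e_1,\dots,e_q\}}$ is complete, observe that the vertex $a'$ guaranteed by Lemma~3.17 already occurs among $v'_0,\dots,v'_{q-1}$, and then note that for every $i<q$ one has $\bar K_{v'_i}\subseteq\overline{\{e_1,\dots,e_i\}}$, which is non-complete and hence consists only of allowed edges; this gives (i) and (iii) for $a'$ without any claim about initial segments.
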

\begin{lemma}\label{requirementdifficulty}
For any knowledge set $K$, if we choose a random input graph of the form $G = G(P,V_0,L,R,\phi)$ then the probability that $K$ is 
useful for $G$ is at most $2(\frac{2k(k + \lg{(kN)})}{N})^m$.
\end{lemma}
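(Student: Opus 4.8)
The plan is to bound the probability that a fixed knowledge set $K$ is useful for a random input $G = G(P,V_0,L,R,\phi)$ by conditioning on the coarse combinatorial data of $K$, namely which vertices appear as endpoints of its edges. Fix $K$ and let $S$ be the set of vertices of $V(G)\backslash\{s,t\}$ appearing as an endpoint of some edge of $K$. For $K$ to be useful for $G$, the edges of $K$ must all be ``allowed'' for $G$ (of the form $v\to w$ with $v\in\{s\}\cup V_0$, $w\in\{t\}\cup V_0$, or $s\to v$ with $v\in L$, or $v\to t$ with $v\in R$), and additionally $|S\cap V_0|\ge m$. So the first step is: the allowedness condition already forces $S$ to be partitioned compatibly with the partition $(V_0,L,R)$ — every vertex of $S$ that is an endpoint of an ``internal'' edge (not touching $s$ or $t$, or touching both via the $s\to v, v\to t$ types only in the degenerate sense) must lie in $V_0$; a vertex that only ever appears in an edge $s\to v$ must lie in $L$; one that only appears in $v\to t$ must lie in $R$. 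The key quantitative point is that useful $K$ requires $\ge m$ endpoints of $K$ to land in the small set $V_0$ (which has size exactly $k$).

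The second step is the probabilistic estimate. A random $G = G(P,V_0,L,R,\phi)$ is chosen by picking an ordered $k$-tuple of distinct vertices to be $V_0$ (with $\phi$ the induced bijection to the path $P$, of which there is essentially one choice up to the path structure), and then partitioning the remaining $N-2-k$ vertices into $L$ and $R$. The probability that a fixed vertex $v\in S$ lands in $V_0$ is at most $k/(N-2)$; for a fixed $m$-subset $S_0\subseteq S$ the probability that $S_0\subseteq V_0$ is at most $\binom{k}{m}/\binom{N-2}{m} \le (k/(N-2))^m$. But we also need the remaining endpoints of $K$ to land in the right blocks ($L$ or $R$), each such event costing roughly a factor $1/2$, and here the $\lg(kN)$ term must come in: the number of edges of $K$ whose placement we must control, after reducing $K$ to a canonical (transitively-reduced) form on its endpoint set $S$, can be bounded, but if $|S|$ is large the $(1/2)$-per-vertex factors only help us — the genuine obstacle is when $|S|$ is close to $m$, i.e. $K$ is lean. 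Summing over the at most $\binom{|S|}{m}$ choices of which $m$ endpoints go into $V_0$, and over the $k!$ (really: bounded) ways $\phi$ can match them to path vertices, and absorbing a union bound over the ``shape'' of $K$ (which requires the $\lg(kN)$ slack, since the number of relevant shapes on a given endpoint set is at most polynomial in $kN$, contributing an additive $\lg(kN)$ inside the $m$-th power after one takes logarithms), gives the claimed bound $2\bigl(\tfrac{2k(k+\lg(kN))}{N}\bigr)^m$.

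Concretely, the steps in order are: (i) record that usefulness forces $|S\cap V_0|\ge m$ and forces every edge of $K$ to be allowed, hence constrains the block-assignment of every vertex of $S$; (ii) observe we may assume $K$ is in reduced form so that $|E(K)|$ is controlled by $|S|$, and note that the number of reduced knowledge sets on a fixed vertex set of size $|S|$ is at most $(kN)^{O(|S|)}$ — this is where the $\lg(kN)$ enters; (iii) for a random $G$, bound $\Pr[S\cap V_0 \supseteq S_0]$ for a fixed $m$-set $S_0$ by $(k/(N-2-\text{const}))^m \le (2k/N)^m$ using $N\ge 10k^2$; (iv) bound the probability that the non-$V_0$ endpoints of $K$ land in the blocks $L$ or $R$ dictated by $K$ by $2^{-(|S|-m)}$ or simply by $1$ if that is not sharp enough, since the real constraint is the $V_0$-hitting; (v) union-bound over the at most $\binom{|S|}{m}$ choices of $S_0$ and the at most $(kN)^{O(|S|)}$ shapes, and check that for all $|S|\ge m$ the product is dominated by the $|S|=m$ term up to the stated factor of $2$, using $N\ge 10k^2$ to make the geometric series converge. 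The main obstacle is step (v): getting the union bound over the shape and the $S_0$-choice to fold cleanly into the single factor $(k+\lg(kN))$ inside the $m$-th power, rather than producing an extra multiplicative $\mathrm{poly}(N)$ out front — this is exactly why the bound is phrased with $k+\lg(kN)$ in the base of the exponential, and it will require being slightly careful that the number of ``shapes'' of a useful $K$ that matter is $2^{O(m(k+\lg(kN)))}$ and not larger.
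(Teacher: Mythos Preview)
Your proposal has a genuine conceptual gap: you misidentify the origin of the $\lg(kN)$ term. The knowledge set $K$ is \emph{fixed} throughout the lemma, so there is nothing to union-bound over ``shapes'' of $K$, and your step~(ii) is a red herring. Likewise, since $|S|$ is determined by $K$, there is no sum over $|S|$ in step~(v); each $K$ has a single value of $|S|$ and the bound must hold for that value.

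The paper's argument is a clean two-case split on $x=|S|$, and the $\lg(kN)$ is simply the threshold of that split. If $x$ is large (at least roughly $k+\lg(kN)$), then for any choice of $V_0$ at least $x-k$ vertices of $S$ lie in $L\cup R$; each such vertex $v$ is placed in $L$ or $R$ by an independent coin flip, and usefulness pins down which side it must go (via whether $K$ contains $s\to v$ or $v\to t$), so the probability is at most $2^{-(x-k)}\le N^{-m}$. If $x$ is small (below the threshold), one ignores the $L/R$ constraint entirely and just bounds $\Pr[|S\cap V_0|\ge m]$ by the hypergeometric tail $\sum_{y\ge m}p(y)$, where $p(y)\le\bigl(\tfrac{xk}{N-x-k-2}\bigr)^{y}$; plugging in the threshold value of $x$ and summing the geometric series gives $2\bigl(\tfrac{2k(k+\lg(kN))}{N}\bigr)^{m}$. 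Your ingredients~(i), (iii), and the $2^{-(|S|-k)}$ idea in~(iv) are essentially the two halves of this split, but you never isolate the threshold, and the attempt to recombine them via a shapes-count in~(v) does not correspond to anything in the actual argument.
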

\begin{proof}
Let $V$ be the union of the endpoints of the edges of $K$. If $|V \backslash \{s,t\}| \geq k + m\lg{N}$, then for any input graph 
$G = G(P,V_0,L,R,\phi)$, $V$ contains at least $m\lg{N}$ vertices in $L \cup R$. $K$ can only be useful for $G$ if 
$\forall v \in L \cap V, s \to v \in K, v \to t \notin K$ and $\forall v \in R \cap V, s \to v \notin K, v \to t \in K$. 
Once $V_0$ has been chosen, each other vertex $v \in V(G) - V_0 - \{s,t\}$ is randomly put into $L$ or $R$, so the probability 
of this occuring is at most $2^{-m\lg{N}} = N^{-m}$.\\
If $|V \backslash \{s,t\}| < k + m\lg{N}$, let $x = |V \backslash \{s,t\}|$. The probability that if we choose a random input graph 
$G = G(P,V_0,L,R,\phi)$ that we will have $|V_0 \cap V| = y$ is \\
$p(y) = \frac{\frac{x!}{y!(x-y)!}\frac{(N - 2 - x)!}{(k-y)!(N+y-x-k-2)!}}{\frac{(N-2)!}{k!(N-2-k)!}} = 
\frac{x!k!}{y!(x-y)!(k-y)!}\frac{(N - 2 - x)!(N - 2 - k)!}{(N-2)!(N-x-k-2)!}\frac{(N-x-k-2)!}{(N+y-x-k-2)!} 
\leq (\frac{xk}{N-x-k-2})^y$\\
$N \geq 10k^2$ and $x < k + m\lg{N} < k + \lg{kN} + 1$, so it is easily verified that \\
$\frac{xk}{N-x-k-2} \leq \frac{2k(k + \lg{(kN)})}{N} \leq \frac{1}{2}$ and 
$\sum_{y \geq m}{p(y)} \leq 2p(m) \leq 2(\frac{2k(k + \lg{(kN)})}{N})^m$.\\
Thus the probability that $K$ will be useful for $G$ is at most $2(\frac{2k(k + \lg{(kN)})}{N})^m$, as needed.
\end{proof}
\noindent Combining Proposition \ref{certainknowledgerequirement} and Lemma \ref{requirementdifficulty}, we immediately see that 
$C(I) \geq \frac{1}{2}(\frac{N}{2k(k + \lg{(kN)})})^m$.
\end{proof}
\noindent
{\bf Corollary \ref{separationcorollary}.}
{\it
There is a family of sets of inputs $\mathcal{I}$ such that $\mathcal{I}$ is hard for certain-knowledge switching networks 
but $\mathcal{I}$ is easy for monotone switching networks for directed connectivity.
}
\begin{proof} If $I_0 = \{P\}$ then it follows from Lemmas 3.12 and 3.17 of Potechin \cite{potechin} that 
$sc(I_0) = \lfloor{\lg(k)}\rfloor + 1$. By Theorem \ref{efficientnetwork}, 
$M(I) \leq 2^{(5(\lg{k} + 1)+3)}k^{(3(\lg{k} + 1)+2)}{N^3}{\lg{N}} 
\leq {2^8}{k^{(3\lg{k} + 10)}}{N^3}{\lg{N}}$\\
Taking $k$ to be $2^{\Theta(\sqrt{\lg{N}})}$, $M(I)$ is polynomial in $N$ but $C(I)$ is superpolynomial in $N$.
\end{proof}
\section{Conclusion}\label{conclusion}
\noindent In this paper, we have shown that the type of input described in Definition \ref{augmentedinputsdef} is easy for a monotone 
switching network to solve but difficult for a certain-knowledge switching network to solve. Although this result does not give any direct 
progress towards the goal of first proving stronger lower bounds on monotone switching networks for directed connectivity and then 
extending them to the non-monotone case, we believe that this result and the ideas used to prove it are nevertheless very valuable. 
First of all, a major obstacle in proving lower bounds in complexity theory is the difficulty of ruling out ``weird'' algorithms or circuits. 
This result shows that for monotone switching networks for directed connectivity, this difficulty is necessary, as there are some inputs 
for which the simple and intuitive certain-knowledge switching networks are not optimal. Second, Fourier analysis played a key role in 
constructing a small monotone switching network solving the type of input described in Definition \ref{augmentedinputsdef}. This gives 
further evidence that the Fourier analysis approach introduced in Potechin \cite{potechin} is the most fruitful way to analyze monotone 
switching networks for directed connectivity. Finally, this result gives insight into what monotone switching networks for directed 
connectivity can and cannot do. If we prove stronger lower bounds as well, this result may allow us to find large classes of inputs for 
which we can determine almost exactly how hard a given input is for a monotone switching network for directed connectivity to solve.
\newpage

\end{document}